\title{A Multivariate Complexity Analysis\\of the Material Consumption Scheduling Problem\thanks{An extended abstract of this work appears in the \emph{Proceedings of the 35th~AAAI Conference on Artificial Intelligence, AAAI'21}. This full version contains additional results for multiple resources as well as full proof details.}}
\author {
        Matthias Bentert,\textsuperscript{\rm 1}
        Robert Bredereck,\textsuperscript{\rm 1, \rm 2}
		P\'{e}ter Gy\"{o}rgyi,\textsuperscript{\rm 3}\\
        Andrzej Kaczmarczyk,\textsuperscript{\rm 1}
	and Rolf Niedermeier\textsuperscript{\rm 1}\\[10pt]
 \small\textsuperscript{\rm 1} Technische Universit\"{a}t Berlin,
 Faculty~IV,\\\small Algorithmics and Computational Complexity,
 Berlin, Germany\\\small{\{matthias.bentert, a.kaczmarczyk, rolf.niedermeier\}@tu-berlin.de} \\[6pt]
 \small\textsuperscript{\rm 2} Humboldt-Universit\"{a}t zu Berlin, Institut
 f\"{u}r Informatik, Algorithm Engineering, Berlin, Germany \\\small{robert.bredereck@hu-berlin.de} \\[6pt]
 \small\textsuperscript{\rm 3} Institute for Computer Science and Control, E\"{o}tv\"{o}s Lor\'{a}nd Research Network, Budapest, Hungary\\\small{gyorgyi.peter@sztaki.hu}
}
\newtheorem{theorem}{Theorem}
\newtheorem{lemma}{Lemma}
\newtheorem{asp}{Assumption}
\newtheorem{proposition}{Proposition}
\theoremstyle{definition}
\newtheorem{definition}{Definition}
\newcommand{\orderedlistingof}[2]{\ensuremath{#1_1, #1_2, \ldots, #1_#2}}
\newcommand{\unarybinpacking}{\textsc{Unary Bin Packing}}
\newcommand{\np}{\ensuremath{\mathrm{NP}}}
\newcommand{\fpt}{\ensuremath{\mathrm{FPT}}}                                     
\newcommand{\xp}{\ensuremath{\mathrm{XP}}}                                       
\newcommand{\wone}{\ensuremath{\mathrm{W[1]}}}                                   
\newcommand{\woneh}{\wone-h}
\newcommand{\wonehard}{\wone-hard}
\newcommand{\wonehardness}{\wonehard{}ness}
\newcommand{\p}{\ensuremath{\mathrm{P}}}
\newcommand{\nphard}{\np-hard}
\crefname{table}{Table}{Tables}
\crefname{figure}{Figure}{Figures}
\crefname{algorithm}{Algorithm}{Algorithms}
\crefname{theorem}{Theorem}{Theorems}
\crefname{definition}{Definition}{Definitions}
\crefname{corollary}{Corollary}{Corollaries}
\crefname{proposition}{Proposition}{Propositions}
\crefname{obs}{Observation}{Observations}
\crefname{lemma}{Lemma}{Lemmas}
\crefname{example}{Example}{Examples}
\crefname{reduction}{Reduction}{Reductions}
\crefname{algocf}{Algorithm}{Algorithms}
\newcommand{\J}{\mathcal{J}}
\newcommand{\R}{\mathcal{R}}
\newcommand{\cmax}{C_{\max}}
\newcommand{\pmax}{p_{\max}}
\newcommand{\amax}{a_{\max}}
\newcommand{\bmax}{b_{\max}}
\newcommand{\umax}{u_{\max}}
\newcommand{\noGapProblem}{\ensuremath{1, \NI |nr=r| - }}
\newcommand{\noGapProblemOne}{\ensuremath{1, \NI |nr=1| - }}
\DeclareMathOperator{\nr}{nr}
\DeclareMathOperator{\NI}{NI}
\DeclareMathOperator{\const}{const}
\DeclareMathOperator{\unary}{unary}
\newcommand{\classP}{\ensuremath{\mathsf{P}}}
\newcommand{\NP}{\ensuremath{\mathsf{NP}}}
\newcommand{\paraNP}{para-\NP}
\newcommand{\pnp}{p-\NP}
\newcommand{\FPT}{\ensuremath{\mathsf{FPT}}}
\newcommand{\XP}{\ensuremath{\mathsf{XP}}}
\newcommand{\probDef}[4]{
\begin{center}   
	\begin{minipage}{.95\textwidth}
		\vspace{2pt} 

		\noindent
		\normalsize\textsc{#1}
		
		\vspace{1pt}
		\setlength{\tabcolsep}{3pt}
		\renewcommand{\arraystretch}{1.0}
		\begin{tabularx}{\textwidth}{@{}lX@{}}
			\normalsize\textbf{Input:} 	& \normalsize#2 \\
			\normalsize\textbf{#4:} 	& \normalsize#3
		\end{tabularx}
	\end{minipage}
\end{center}
}
\newcommand{\optProb}[3]{\probDef{#1}{#2}{#3}{Task}}
\newcommand{\probMCSLong}{\textsc{Material Consumption Scheduling Problem}}
\begin{document}
\maketitle

\begin{abstract}
 The NP-hard \textsc{Material Consumption Scheduling Problem} and 
 related problems have been thoroughly studied since the 1980's. Roughly
 speaking, the problem deals with minimizing the makespan when scheduling jobs
 that consume non-renewable resources. We focus on the single-machine case
 without preemption: from time to time, the resources of the machine are
 (partially) replenished, thus allowing for meeting a necessary precondition for
 processing further jobs, each of which having individual resource demands. We initiate a
 systematic exploration of the parameterized computational complexity landscape of the
 problem, providing parameterized tractability as well as intractability
 results. Doing so, we mainly investigate how parameters related to the resource
 supplies influence the problem's computational complexity. This leads to a deepened
 understanding of this fundamental scheduling
 problem.
\end{abstract}

\noindent{\textbf{Keywords.} non-renewable resources, makespan minimization, parameterized
computational complexity, fine-grained complexity, exact algorithms}

\section{Introduction}
Consider the following motivating example. 
Every day, an agent works for a number of clients, all of equal importance. 
The clients, one-to-one corresponding to jobs,
each time request a service having individual processing time and 
individual consumption of a non-renewable resource; examples 
for such resources include raw material, energy, and money. 
The goal is to finish all jobs as early as possible, 
known as minimizing the makespan in the scheduling literature.
Unfortunately, the agent only has a limited 
initial supply of the resource which is to be 
renewed (with potentially different amounts)
at known points of time during the day. 
Since the
job characteristics (resource consumption, job length) and 
the resource delivery characteristics (delivery amount, point of time)
are known in advance, 
the objective thus is to find a feasible 
job schedule minimizing the makespan. Notably, jobs 
cannot be preempted and only one at a time can be executed.
Figure~\ref{fig:ExampleSchedule} provides a concrete numerical 
example with 
six jobs having varying job lengths and resource 
requirements.
\begin{figure*}
\begin{subfigure}{.3\textwidth}
\center
\begin{small}
\begin{tabular}{c|c|c|c|c|c|c}
$p_j$ & 1 & 1 & 1 & 2 & 2 & 3 \\ 
\midrule
{$a_{j}$} & 3 & 1 & 2 & 3 & 2 & 6 \\ 
\end{tabular}

\vspace{.4cm}
\begin{tabular}{c|c|c|c|c}
$u_\ell$ & 0 & 3 & 5 & 9\\ 
\midrule
$\tilde{b}_\ell$ & 3 & 6 & 2 & 6\\ 
\end{tabular}
\end{small}
\end{subfigure}
\begin{subfigure}{.7\textwidth}
\begin{tikzpicture}[thick,scale=.8, every node/.style={transform shape}]
\def\yy{-2}
\coordinate(u1) at (0,\yy);
\coordinate(u2) at (3,\yy);
\coordinate(u3) at (5,\yy); 
\coordinate(u4) at (9,\yy);
\coordinate(cm) at (12,\yy);

\coordinate(uv1) at (0,\yy-0.2);
\coordinate(uv2) at (3,\yy-0.2);
\coordinate(uv3) at (5,\yy-0.2); 
\coordinate(uv4) at (9,\yy-0.2);
\coordinate(cmv) at (12,\yy-0.2);

\coordinate(dd) at (-0.3,\yy);
\draw [-latex](0,\yy)-- (12.5,\yy)coordinate(ff) node[above]{$t$}; 

\draw[] (u1) -- (uv1) node[below] {$u_{1}$};

\foreach \xx in{2,3,4}{
\draw[] (u\xx) -- (uv\xx) node[below] {$u_{\xx}$};
}
\draw[] (cm) -- (cmv) node[below] {$\cmax=12$};

\def\hhh{0.29}
\def\hh{\hhh+\hhh}
\node[above right=\hhh cm and -0.01 cm of u1,right,draw, minimum width=1cm,minimum height=\hh cm,fill=white](j1) {$J_3$};
\node[right=-0.1cm of j1,right, draw,minimum width=1cm,minimum height=\hh cm,fill=white](j2){$J_2$};

\node[above right=\hhh cm and -0.01cm of u2,right, draw,minimum width=1.1cm,minimum height=\hh cm,fill=white](j3){$J_1$};
\node[right=-0.1cm of j3,right, draw,minimum width=2cm,minimum height=\hh cm,fill=white](j4){$J_5$};
\node[right=-0.1cm of j4,right, draw,minimum width=2cm,minimum height=\hh cm,fill=white](j5){$J_4$};

\node[above right=\hhh cm and -0.01cm of u4,right, draw,minimum width=3cm,minimum height=\hh cm,fill=white](j6){$J_6$};
\end{tikzpicture}
\end{subfigure}
\caption{An example (left) with one resource type and a solution (right) with makespan~$12$.
	The processing times  and the resource requirements are in the first table, while the supply dates and the supplied quantities are in the second.
	Note that $J_3$ and $J_2$ consume all of the resources supplied at $u_1=0$, thus we have to wait for the next supply to schedule further jobs.}
\label{fig:ExampleSchedule}
\end{figure*}
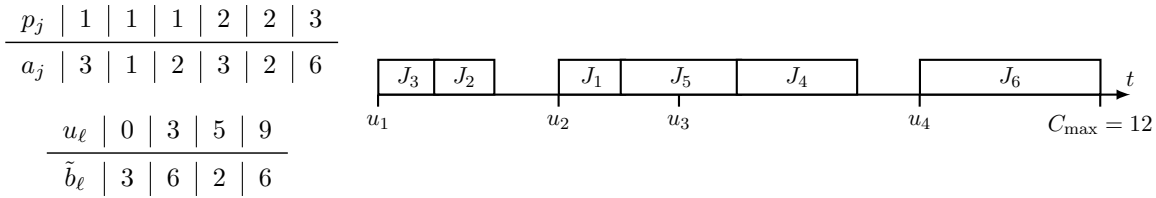%

The described problem setting
is known as minimizing the makespan on
a single machine with non-renewable resources.
More specifically, we study the single-machine variant of the NP-hard 
\probMCSLong{}.
Formally, we study the following problem.
\optProb{Material Consumption Scheduling Problem}
{
	A set~$\R$ of resources, a set~$\J=\{J_1,\ldots,J_n\}$ of jobs each job~$J_j$ with a processing time~$p_j\in \mathbb{Z}_+$ and a resource requirement $a_{ij}\in \mathbb{Z}_+$ from resource~$i\in \R$, and a set~$\{u_1,u_2,\ldots,u_q\}$ of points of time with~$0=u_1<u_2<\cdots<u_q$, where~$\tilde{b}_{i,\ell}$~quantities of resource~$i\in \R$ are supplied.
}
{
Find a \emph{schedule}~$\sigma$ with minimum makespan for a single machine without preemption which is \emph{feasible}, that is, (i) the jobs do not overlap in time, and (ii) at any point of time~$t$ the total supply from each resource is at least the total request of the jobs starting until~$t$.
}

The objective is to minimize the makespan, that is, the maximum time that a job is completed.
Formally, the makespan is defined by~$\cmax:=\max_{j\in \J}C_j$, where $C_j$ is the completion time of job $J_j$.
Notably, in our example in Figure~\ref{fig:ExampleSchedule} we considered the special but perhaps most prominent case of just one type of resource.
In this case we simply drop the indices corresponding to the single resource.
In the remainder of the paper, we make the following simplifying assumptions
guaranteeing sanity of the instances and filtering out trivial cases.
\begin{asp}\label{asp:1}
 Without loss of generality, we assume that
 \begin{enumerate}
  \item there are enough resources supplied to process all jobs:
   $\sum_{\ell=1}^q\tilde{b}_\ell\geq \sum_{j\in\J}a_j$;
  \item each job has at least one non-zero resource requirement: 
   $\forall_{j \in \J} \sum_{i \in \R} a_{i,j} > 0$; and
  \item at least one resource unit is supplied at time~$0$:
   $\sum_{i \in \R} \tilde{b}_{i, 0} > 0$.
 \end{enumerate}
\end{asp}

Note that each of these assumptions can be verified in linear time.
It is valid to make these assumptions because of the following.
If the first assumption does not hold, then there is no feasible schedule.
If the second assumption does not hold, then we can schedule all jobs without resource requirements in the beginning.
Thus, we can remove these jobs from the instance and adjust the supply times of new resources accordingly.
If the third assumption does not hold (but the second does), then we cannot schedule any job before the first resource supply.
Thus, we can adjust the supply time of each resource requirement such that there is a resource supply at time~$0$ and get an equivalent instance. 

It is known that the \probMCSLong{} is NP-hard even in the case of just one machine,
only two supply dates ($q=2$),
and if the processing time of each job is the same as its resource requirement,
that is, $p_j=a_j$ for each~$j\in \J$ \cite{Carlier84}.
While many variants of the \probMCSLong{} have been studied in the literature in terms of heuristics,
polynomial-time approximation algorithms, or the detection of 
polynomial-time solvable special cases, we are not aware 
of any previous systematic studies 
concerning a multivariate complexity analysis.
In other words, we study, seemingly for the first time, several 
natural problem-specific parameters
and investigate how they influence the computational complexity of the problem.
Doing so, we prove both parameterized hardness as well as 
fixed-parameter tractability results for this NP-hard problem.

\paragraph{Related Work.}
Over the years, performing multivariate, parameterized 
complexity studies for
fundamental scheduling problems became more and more 
popular~\cite{BBN19,BMNW15,BNS17,BF95,BW20,GHM20,FM03,HHMMNS21,HKSTW19,HPST19,HMPSY20,HST19,KK18,MB18,MW15}.
We contribute to this field by a seemingly first-time exploration of 
the \probMCSLong, focusing on one machine 
and the minimization of the makespan.

The problem was introduced in the 1980's~\cite{Carlier84,Slowinski84}. Indeed, even a bit earlier a problem where the jobs required non-renewable resources, but without any machine environment, was studied~\cite{CarlRin82}. 
There are several real-world applications, for instance, in the
continuous casting stage of steel production~\cite{herr16}, in managing
deliveries by large-scale distributors~\cite{Belkaid12}, or in shoe production~\cite{Carrera10}.

\citet{Carlier84} proved several complexity results for different variants in the single-machine case,
while \citet{Slowinski84} studied the parallel machine variant of the problem with preemptive jobs.
Previous theoretical results mainly concentrate on the computational complexity and polynomial-time approximability of different variants; in this literature review we mainly focus on the most important results for the single-machine case and minimizing makespan as the objective.
We remark that there are several recent results for variants with other objective functions~\cite{Berczi20,Gyorgyi19,gyorgyi20}, with a more complex machine environment~\cite{Gyorgyi17}, and with slightly different resource constraints~\citep{davari20}.

\citet{Toker91} proved that the variant where the jobs require one non-renewable resource reduces to the \textsc{2-Machine Flow Shop Problem} provided that the single non-renewable resource has a 
unit supply in every time period.
Later, \citet{Xie97} generalized this result to multiple resources.
\citet{Grigoriev05} showed that the variant with unit processing times and two resources is NP-hard, and they also provided several polynomial-time 2-approximation algorithms for the general problem. 
There is also a polynomial-time approximation scheme (PTAS) for the variant with one resource and a constant number of supply dates and a fully polynomial-time approximation scheme (FPTAS) for the case with~$q=2$ supply dates and one non-renewable resource~\cite{Gyorgyi14}. 
\citet{Gyorgyi15}
presented approximation-preserving reductions between problem variants with~$q=2$ and variants of the \textsc{Multidimensional Knapsack Problem}.
These reductions have several consequences, for example,  it was shown that the problem is NP-hard if there are two resources, two supply dates, and each job has a unit processing time, or that there is no FPTAS for the problem with two non-renewable resources and $q=2$ supply dates, unless P $=$~NP. 
Finally, there are three further results~\cite{Gyorgyi15b}: 
(i)~a PTAS for the variant where the number of resources and the number of
supply dates are constants; (ii)~a PTAS for the variant with only one resource and an arbitrary number of supply dates if  the resource requirements are proportional to job processing times; and (iii)~an APX-hardness when the number of resources is part of the input. 

\paragraph{Preliminaries and Notation.}
We use the standard three-field $\alpha|\beta|\gamma$-notation~\cite{Graham79},
where $\alpha$~denotes the machine environment, $\beta$~the further constraints
like additional resources, and $\gamma$~the objective function. We always
consider a single machine, that is, there is a~$1$ in the $\alpha$ field. The non-renewable
resources are described by~$\nr$ in the $\beta$ field and $\nr = r$ means that
there are $r$ different resource types. In our work, 
the only considered objective is the
makespan~$\cmax$. The~\probMCSLong{} variant with a single
machine, single resource type, and with the makespan as the objective is then expressed as~$1| \nr = 1|\cmax$. Sometimes, we also consider the so-called
\emph{non-idling scheduling} (introduced by~\citet{C08:non-idle}), indicated by $\textrm{NI}$ in the $\alpha$ field, in which a
machine can only process all
jobs continuously, without intermediate idling. As we make the simplifying
assumption that the machine has to start processing jobs at time~$0$, we drop
the optimization goal~$\cmax$ whenever considering non-idling scheduling.
When there is just one resource ($\nr=1$), then we write~$a_j$ instead
of~$a_{1,j}$ and~$\tilde{b}_{j}$ instead of~$\tilde{b}_{1,j}$, etc.
We also write~$p_j=1$ or $p_j=ca_j$ whenever, respectively, jobs have solely unit
processing times or the resource requirements are proportional to the job
processing times. Finally, we use~``$\unary$'' to indicate that all numbers in
an instance are encoded in unary. Thus, for example, $1, \NI| p_j=1, \unary | -$
denotes a single non-idling machine, unit-processing-time jobs and the unary
encoding of all numbers. We summarize the notation of the parameters that we
consider in the following table.
\begin{center}
\begin{tabular}{l|l}
\toprule
$n$ & number of jobs \\
$q$ & number of supply dates\\
$j$ & job index\\
$\ell$ & index of a supply\\
$p_j$ & processing time of job $j$\\
$a_{i,j}$ &  resource requirement of job $j$ from resource $i$\\
$u_\ell$ &  the  $\ell^{\text{th}}$ supply date\\
$\tilde{b}_{i,\ell}$ & quantity supplied from resource $i$ at  $u_\ell$\\
$b_{i,\ell}$ & total resource supply from resource $i$\\ &over the first $\ell$
supplies, that is,  $\sum_{k=1}^\ell \tilde{b}_{i, k}$\\
\toprule
\end{tabular}
\end{center}
To simplify matters, we introduce the shorthands~$\amax$,~$\bmax$, and~$\pmax$
for~$\max_{j\in \J, i\in \R}a_{ij}$, $\max_{\ell\in \{1,\ldots,q\}, i\in \R}\tilde{b}_{i, \ell}$, and~$\max_{j\in \J}p_j$, respectively.

\paragraph{Primer on Multivariate Complexity.}
To analyze the parameterized complexity~\cite{CFKLMPPS15,DF13,FG06,Nie06}
of the \probMCSLong{}, we declare some part of the input the \emph{parameter}
(e.g., the number of supply dates).
A parameterized problem is \emph{fixed-parameter tractable}
if it is in the class \FPT{} of problems solvable in~$f(\rho) \cdot
|I|^{O(1)}$~time, where $|I|$~is the size of a given instance encoding,
$\rho$~is the value of the parameter, and $f$~is an arbitrary computable
(usually super-polynomial) function.
Parameterized hardness (and completeness) is defined through parameterized reductions
similar to classical polynomial-time many-one reductions.
For our work, it suffices to additionally ensure that the value of the parameter in the
problem we reduce to depends only on the value of the parameter of the
problem we reduce from.
To obtain parameterized intractability, we use parameterized reductions
from problems of the class~$\wone$ which is widely believed 
to be a proper superclass of~\FPT. For instance, the famous graph problem \textsc{Clique} is $\wone$-complete with respect to the parameter size of the clique~\cite{DF13}.

The class~\XP{} contains
all problems that can be solved in~$|I|^{f(\rho)}$~time for a function~$f$
solely depending on the parameter~$\rho$.
While \XP{} ensures polynomial-time solvability when~$\rho$ is a constant,
\FPT{} additionally ensures that the degree of the polynomial is independent
of~$\rho$.
Unless~$\classP=\NP$, membership in \XP{} can be excluded by showing that the problem
is \NP-hard for a constant parameter value---for short, we say that the problem is \paraNP-hard.

\paragraph{Our Contributions.}

Most of our results are summarized in Table~\ref{tab:results}.
We focus on the parameterized computational complexity of the \probMCSLong{} with
respect to several parameters describing resource supplies.
We show that the case of a single resource and jobs with unit processing time is polynomial-time solvable.
However, if each job has a processing time proportional to its resource
requirement, then the \probMCSLong{} becomes \nphard{} even for a single resource
and when each supply provides one unit of the resource.
Complementing an algorithm solving the \probMCSLong{} in polynomial time for a constant number~$q$ of supply dates, we show by proving \wonehardness{},
that the parameterization by~$q$ presumably does not yield fixed-parameter
tractability.
We circumvent the~\wonehardness{} by combining the parameter~$q$ with the maximum resource requirement~$\amax$ of a job,
thereby obtaining fixed-parameter tractability for the combined parameter~$q+\amax$.
Moreover, we show fixed-parameter tractability for the parameter~$u_{\max}$ which denotes the last resource supply time.
Finally, we provide an outlook on cases with multiple resources and show
that fixed-parameter tractability for~$q+\amax$ extends when we additionally add the number of resources~$r$ to the combined parameter, that is, we show fixed-parameter tractability for~$q+\amax+r$.
For the \probMCSLong{} with an unbounded number of resources, we show intractability even for the case
where all other previously discussed parameters are combined.

\begin{table*}[t]
 \centering
 \caption{Our results for a single resource type (top) and multiple
  resource types (bottom). The results correspond to
  \Cref{thm:unit-P}~($\ddag$),
 \Cref{thm:q-w1h}~($\diamond$),
 \citet{Gyorgyi14}~($\clubsuit$),
 \Cref{thm:b-pNP-hard}~($\blacksquare$),
 \Cref{thm:umax-fpt}~($\Diamond$),
 \Cref{thm-amax-XP}~($\blacktriangle$),
 \Cref{thm:q-bmax-fpt}~($\dagger$),
 \Cref{prop:nr2qW1}~($\heartsuit$),
 \Cref{prop:q-bmax-fpt-const-resources}~($\spadesuit$),
 \Cref{thm-superhard}~($\blacktriangledown$), and
 \Cref{prop-umax-XP}~($\blacklozenge$).
	\p{} stands for polynomial-time solvable, \woneh{} and \pnp{} stand for W[1]-hardness and \paraNP-hardness, respectively. 
 }
 \begin{tabular}{lccccc}
   & $q$ & $b_{\max}$ & $u_{\max}$ & $a_{\max}$ & $a_{\max} + q$
  \\ \toprule

  $1|\nr=1,\ p_j = 1 | \cmax$ &
  \multicolumn{5}{c}{\p{}$^\ddag$} \\

  $1 |\nr=1,\ p_j = c a_j | \cmax$ &
  \woneh{}$^\diamond$, \xp{}$^\clubsuit$ & \pnp{}$^\blacksquare$ &\fpt{}$^\Diamond$ & \xp{}$^\blacktriangle$ & 
  \fpt$^\dagger$ \\

  $1 |\nr=1,\ \unary | \cmax$ &
  \woneh{}$^\diamond$, \xp{}$^\clubsuit$ & \pnp{}$^\blacksquare$ & \fpt{}$^\Diamond$ & \xp{}$^\blacktriangle$ &
  \fpt$^\dagger$ \\ \midrule

  $1|\nr=2,\ p_j = 1,\ \unary | \cmax$ &
  \woneh{}$^\heartsuit$, \xp{}$^\clubsuit$ & \pnp{}$^\blacksquare$ & \xp{}$^\clubsuit$ & \xp{}$^\blacktriangle$ &
  \fpt$^\spadesuit$\\

  $1 |\nr= \const,\ \unary | \cmax$ &
  \woneh{}$^\heartsuit$, \xp{}$^\clubsuit$ & \pnp{}$^\blacksquare$ & \xp{}$^\clubsuit$ & \xp{}$^\blacktriangle$ &
  \fpt$^\spadesuit$\\

  $1 |\nr,\ p_j=1| \cmax$ &
  \pnp{}$^\blacktriangledown$ & \pnp{}$^\blacktriangledown$ & \woneh{}$^\blacktriangledown$, \xp{}$^\blacklozenge$ & \pnp{}$^\blacktriangledown$ &
  \woneh{}$^\blacktriangledown$\\ \bottomrule

 \end{tabular}

 \label{tab:results}
\end{table*}

\section{Computational Hardness Results}
We start our investigation on the \probMCSLong{}
with outlining the limits of efficient computability. Setting up clear borders
of tractability, we identify potential scenarios suitable for spotting efficiently solvable special cases.
This approach is especially justified because the \probMCSLong{} is
already~\nphard{} for the quite constrained scenario of unit processing times and
two resources~\citep{Grigoriev05}.

Both hardness results in this section use reductions from \textsc{Unary Bin Packing}.
Given a number~$k$ of bins, a bin size~$B$, and a set~$O=\{o_1,o_2,\ldots o_n\}$
of~$n$~objects
of~sizes~$s_1,s_2,\ldots s_n$ (encoded in unary),
\textsc{Unary Bin Packing} asks to distribute the objects to the bins such that
no bin exceeds its capacity.
\textsc{Unary Bin Packing} is \NP-hard and \wone-hard parameterized by the number~$k$ of bins
even if~$\sum_{i=1}^{n} s_i = k B$ \cite{JKMS13}.

We first focus on the case of a single resource, for which we find a strong
intractability result. In the following theorem, we show that even if each
supply comes with a single unit of a resource, then the problem is
already~\nphard{}.

\begin{theorem} \label{thm:b-pNP-hard}
 $1|\nr=1, p_j = c a_j|\cmax$ is para-\nphard{} with respect to the maximum
 number~$b_{\max}$ of resources supplied at once even if all numbers are encoded
 in unary.
\end{theorem}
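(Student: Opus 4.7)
The plan is to reduce from \textsc{Unary Bin Packing}, which is \NP-hard even in unary, to $1|\nr=1, p_j=ca_j|\cmax$ with a \emph{constant} $b_{\max}$. The main difficulty is that all supply times must be distinct, so a single ``supply of $B$ units'' cannot be imitated directly when $b_{\max}$ is bounded by a constant. The key idea is to replace each intended bulk supply of $B$ units by $B$ unit-sized supplies spread out over consecutive integer times, and to ``absorb'' the resulting supply ramp by placing a tiny dummy job at the start of every bin.

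Given a \textsc{Unary Bin Packing} instance with $k$ bins of capacity $B$ and items $s_1,\ldots,s_n$ summing to $kB$, I would set $c := 2B+1$ and $T := c(2B+1)$ and construct: $n$ \emph{item jobs} with $a_j := 2s_j$ and $p_j := 2c s_j$, and $k$ \emph{dummy jobs} with $a := 1$ and $p := c$; note that $p_j = c\,a_j$ holds for all jobs. I would place one unit of resource at each of the times $(i-1)T, (i-1)T+1, \ldots, (i-1)T+2B$ for $i=1,\ldots,k$, giving $k(2B+1)$ supply dates. The target makespan is $M := kT$. All numbers are polynomial in $B$ and $k$, so the reduction is polynomial in unary encoding, and by construction $b_{\max}=1$.

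For the forward direction, given a packing $\mathcal{B}_1,\ldots,\mathcal{B}_k$, I schedule inside each interval $[(i-1)T, iT)$ first the $i$-th dummy and then the items of $\mathcal{B}_i$ back-to-back. At the start of the dummy, both cumulative demand and cumulative supply equal $(i-1)(2B+1)+1$; by the time the dummy completes at $(i-1)T + c = (i-1)T + 2B + 1$, the full plateau supply $i(2B+1)$ has been delivered, so the remaining items of total $a$-demand $2B$ fit in any order in the rest of the interval.

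The main obstacle is the backward direction. Given a schedule with makespan at most $M$, the total processing time $c\sum_j a_j = c \cdot k(2B+1) = kT = M$ forces the schedule to be non-idling from time $0$. Since $p_j = c\,a_j$ uniformly, the cumulative processing time of jobs started by any time $t$ equals $c$ times their cumulative demand, which is bounded by the supply at $t$. Evaluating this at $t = iT$ (where the supply is exactly $i(2B+1)$) shows that the cumulative processing of jobs started by $iT$ is at most $iT$, so in a non-idling schedule every such job must already \emph{finish} by $iT$; hence the bins are ``clean''. A telescoping argument then pins the $a$-sum in bin $i$ to exactly $2B+1$. Because dummies have $a=1$ (odd) and items have even $a$, while there are exactly $k$ dummies, a parity count forces precisely one dummy per bin; the remaining items then have $s$-sum $B$, yielding a valid bin packing.
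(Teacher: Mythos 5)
Your proposal is correct and takes essentially the same route as the paper: a reduction from \textsc{Unary Bin Packing} in which each intended bulk supply is spread into a ramp of unit-sized supplies (forcing $b_{\max}=1$), filler jobs absorb the ramps (you use $k$ dummies, one at the start of each bin; the paper uses a single special job $J^*$ at time $0$ and places each ramp at the end of the preceding bin's window), and the backward direction is driven by total processing time equalling the target makespan (forcing a gapless schedule) together with $p_j = c\,a_j$ turning the supply constraint into clean phase boundaries. One small repair is needed in your backward direction: for $i<k$ the cumulative supply at $t=iT$ is $i(2B+1)+1$, not $i(2B+1)$, since block $i+1$'s first unit arrives exactly at $iT$, so a job started \emph{at} $iT$ could a priori push the cumulative processing to $iT+c$; you should instead bound the jobs started \emph{strictly before} $iT$, whose available supply is indeed $i(2B+1)$, after which your clean-bin, telescoping, and parity arguments go through as written (your parity trick replaces the paper's observation that only one unit is available when a phase begins, which forces a dummy first).
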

\begin{proof}
 Given an instance~$I$ of \textsc{Unary Bin Packing} with~$\sum_{i=1}^{n} s_i = k B$, 
 we construct an instance~$I'$ of~$1|\nr=1|\cmax$ with~$\bmax = 1$ as described below. 

 We define~$n$~jobs~$J_1=(p_1,a_{1}),J_2=(p_2,a_{2}),\ldots, J_n=(p_n,a_{n})$
 such that~$p_i = 2B s_{i}$ and $a_i = 2s_i$. We also introduce a special job~$J^* =
 (p^*, a^*)$, with $p^*=2B$ and~$a^*=1$. Then, we set $2kB$~supply dates as
 follows. For each~$i \in \{0, 1, \ldots, k-1\}$ and~$x \in \{0, 1, \ldots, 2B-1\}$, we
 create a supply date~$q_i^x = (u_i^x, \tilde{b}_i^x) := ((2B + i2B^2) - x,1)$.
 We add a special supply date~$q^* := (0, 1)$. Next, we show that~$I$ is a
 yes-instance if and only if there is a gapless schedule for~$I'$, that
 is,~$C_{\max} = 2(B^2+B)$.
 An example of this construction is depicted 
in~\cref{fig:unaryBinPacking}. 

 \begin{figure*}
\centering
\begin{tikzpicture}
	\def\unit{1.05cm}
	\begin{scope}[x=\unit]
	 \node[rectangle,pattern=north east lines, minimum height=10pt, minimum
	 width=1.5*\unit, anchor = west] at (-3,5.5) () {};
	 \node[rectangle,pattern=north east lines, minimum height=10pt, minimum
	 width=1.4*\unit, anchor=west] at (3.1,5.5) () {};

	 \node[rectangle, draw, minimum width=1.5*\unit,
	 label={\small\vphantom{$J_1$}$J^*$}] at (-2.25,6) {};
 	\node[rectangle, draw, minimum width=1.5*\unit, label={\small $J_1 = (8, 1)$}] at (-.75,6) {};
         \node[rectangle, draw, minimum width=4.5*\unit, label={\small $J_4 = (24, 8)$}] at (2.25,6) {};
         \node[rectangle, draw, minimum width=3*\unit, label={\small $J_2 = (16, 2)$}] at (6,6) {};
 	\node[rectangle, draw, minimum width=3*\unit, label={\small $J_3 = (16, 1)$}] at (9,6) {};
 
 	\node at (-3,5) {$0$};
 	\node at (-1.5,5) {$8$};
 	\node at (0,5) {$16$};
 	\node at (1.5,5) {$24$};
 	\node at (3,5) {$32$};
 	\node at (4.5,5) {$40$};
 	\node at (6,5) {$48$};
 	\node at (7.5,5) {$56$};
 	\node at (9,5) {$64$};
 	\node at (10.5,5) {$72$};
 
 	\node at (-3.5,5.5) (a) {};
 	\node at (11,5.5) (b) {};
 	\node at (-3,5.15) (c) {};
 	\node at (-3,5.85) (d) {};
 	\node at (-1.5,5.15) (e) {};
 	\node at (-1.5,5.85) (f) {};
 	\node at (0,5.25) (g) {};
 	\node at (0,5.75) (h) {};
 	\node at (1.5,5.25) (i) {};
 	\node at (1.5,5.75) (j) {};
 	\node at (3,5.25) (k) {};
 	\node at (3,5.75) (l) {};
 	\node at (4.5,5.15) (m) {};
 	\node at (4.5,5.85) (n) {};
 	\node at (6,5.25) (o) {};
 	\node at (6,5.75) (p) {};
 	\node at (7.5,5.25) (q) {};
 	\node at (7.5,5.75) (r) {};
 	\node at (9,5.25) (s) {};
 	\node at (9,5.75) (t) {};
 	\node at (10.5,5.25) (u) {};
 	\node at (10.5,5.75) (v) {};
 	\draw[->] (a) to (b);
 	\draw (c) to (d);
 	\draw (e) to (f);
 	\draw (g) to (h);
 	\draw (i) to (j);
 	\draw (k) to (l);
 	\draw (m) to (n);
 	\draw (o) to (p);
 	\draw (q) to (r);
 	\draw (s) to (t);
 	\draw (u) to (v);
       \end{scope} 
\end{tikzpicture}
\caption{An example of the construction in the proof of \cref{thm:b-pNP-hard} for an
instance of \textsc{Unary Bin Packing} consisting of~$k=2$~bins each of
size~$B=4$ and four objects~$o_1$ to~$o_4$ of sizes $s_1=1$, $s_2=s_3=2$,
and~$s_4=3$. In the
resulting instance of~$1|\nr=1, p_j = c a_j|\cmax$, there are five
jobs~($J^*$ and one job corresponding to each input object) and at each (whole)
point in time of
the hatched periods there is a supply of one resource. An optimal
schedule that first schedules~$J^*$ is depicted. Note that the time periods
between the (right-hand) ends of hatched periods correspond to a multiple of the bin size and
a schedule is gapless if and only if the objects corresponding to jobs scheduled
between the ends of two consecutive shaded areas exactly fill a bin.}
\label{fig:unaryBinPacking}
\end{figure*}
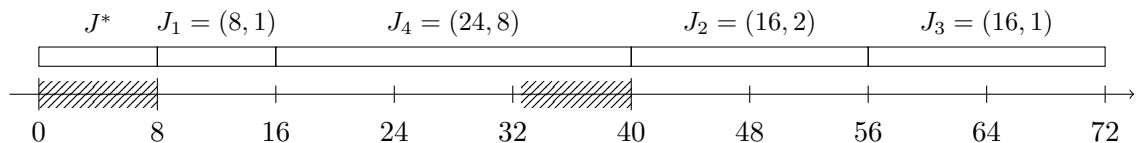
 
 We first show that each solution to~$I$ can be efficiently transformed to a
 schedule with~$C_{\max} = 2(B^2+B)$. A yes-instance for~$I$ is a partition of
 the objects into~$k$ bins such that each bin is (exactly) full. Formally, there
 are $k$~sets~$S_1,S_2,\ldots S_k$ such that~$\bigcup_i S_i = O$, $S_i \cap S_j
 = \emptyset$ for all~$i \neq j$, and~$\sum_{o_i \in S_j} s_i = B$ for all~$j$.
 We form a schedule for~$I'$ as follows. First, we schedule job~$J^*$ and then,
 continuously, all jobs corresponding to elements of set~$S_1$, $S_2$, and so
 on. The special supply~$q^*$ guarantees that the resource requirement of
 job~$J^*$ is met at time~$0$. The remaining jobs, corresponding to
 elements of the partitions, are scheduled earliest at time~$2B$,
 when~$J^*$ is processed. The jobs representing each partition, by definition,
 require in total $2B$~resources and take, in total, $2B^2$ time. Thus, it is
 enough to ensure that in each point~$2B + i2B^2$, for~$i \in \{0, 1, \ldots,
 k-1\}$, there are at least $2B$~resources available. This is true
 because for each~$i \in \{0, 1, \ldots, k-1\}$ the time point~$2B + iB^2$ is preceded with $2B-1$~supplies
 of one resource. Furthermore, none of the preceding jobs can use the
 freshly supplied resources as the schedule must be gapless and all
 processing times are multiples of~$2B$. As a result, the schedule is
 feasible.

 Now we show that a gapless schedule for~$I'$ implies that~$I$ is a
 yes-instance. Let $\sigma$ be a gapless schedule for~$I'$. Observe that
 all processing times are multiples of~$2B$ and therefore each job has to
 start at a time that is a multiple of~$2B$. For each~$i \in \{0, 1, \ldots,
 k-1\}$, we show that there is no job that is scheduled to start before~$2B + iB^2$ and
 to end after this time. We show this by induction on~$i$. Since at time~$0$ there
 is only one resource available, job~$J^*$ (with processing time~$2B$) must be
 scheduled first. Hence the statement holds for~$i=0$. Assuming that the
 statement holds for all~$i < i'$ for some~$i'$, we show that it also holds
 for~$i'$. Assume towards a contradiction that there is a job~$J$ that starts
 before~$t := 2B + i'B^2$ and ends after this time. Let~$S$ be the set of all jobs that
 were scheduled to start between~$t_0 := 2B + (i'-1)B^2$ and~$t$. Recall that for
 each job~$J_{j'} \in S$, we have that~$p_{j'} = a_{j'}B$. Hence, since~$J$ ends
 after~$t$, the number of resources used by~$S$ is larger than~$\nicefrac{(t -
 t_0)}{B} = B$. Since only $2B$~resources are available at time~$t$, job~$J$ cannot be
 scheduled before time~$t$ or there is a gap in the schedule, a
 contradiction. Hence, there is no job that starts before~$t$ and ends after it.
 Thus, the jobs can be partitioned into ``phases,'' that is, there are~$k+1$
 sets~$T_0, T_1, \ldots, T_k$ such that $T_0 = \{J^*\}$, $\bigcup_{h>0} T_h = \mathcal{J}
 \setminus \{J^*\}$, $T_h \cap T_j = \emptyset$ for all~$h \neq j$,
 and~$\sum_{J_j \in T_g} p_j = 2B^2$ for all~$g$. This corresponds to a bin
 packing where~$o_g$ belongs to bin~$h>0$ if and only if~$J_g \in T_h$.
\end{proof}

Note that \Cref{thm:b-pNP-hard} excludes pseudo-polynomial algorithms for the case under
consideration since the theorem statement is true also when all numbers are
encoded in unary.
\Cref{thm:b-pNP-hard} motivates to study further problem-specific parameters. 
Observe that in
the reduction presented in the proof of \Cref{thm:b-pNP-hard}, 
we used an unbounded number of supply dates.
\mbox{\citet{Gyorgyi14}} have shown a pseudo-polynomial algorithm for $1|\nr=1|\cmax$
for the case that the number~$q$ of supplies is a constant.
Thus, the question arises whether we can even obtain fixed-parameter tractability
for our problem by taking the number of supply dates as a parameter.
Devising a reduction from \textsc{Unary Bin Packing},
we answer this question negatively in the following theorem.

\begin{theorem} \label{thm:q-w1h}
 $1|\nr=1,\ p_j=a_j|\cmax$ parameterized by the number~$q$ of supply dates is
 \wonehard{} even if all numbers are encoded in unary.
\end{theorem}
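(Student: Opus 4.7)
The plan is to reduce from \textsc{Unary Bin Packing} (with $\sum_i s_i = kB$), which is \wonehard{} parameterized by the number~$k$ of bins. Given such an instance with bin size~$B$ and objects~$o_1, \ldots, o_n$ of sizes~$s_1, \ldots, s_n$, I would construct an instance of~$1|\nr=1, p_j = a_j|\cmax$ by introducing, for each object~$o_j$, one job~$J_j$ with $p_j = a_j = s_j$, and by placing $q = k$ supply dates at times $u_\ell = (\ell-1)B$ for $\ell = 1, \ldots, k$, each delivering $\tilde b_\ell = B$ units of the single resource. All numbers stay polynomially bounded in the unary size of the input, and the new parameter $q = k$ depends only on the original parameter, so this is a valid parameterized reduction.

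Since $\sum_j p_j = kB$, every feasible schedule has makespan at least $kB$, and this bound is attained precisely when the schedule is gapless on $[0, kB]$. I would then show that such a gapless schedule exists if and only if the given bin packing instance is a yes-instance. The easy direction schedules the jobs corresponding to bin~$\ell$ consecutively inside the window $[(\ell-1)B, \ell B]$; feasibility follows because for every $t \in [(\ell-1)B, \ell B)$ the cumulative demand of started jobs is at most $(\ell-1)B + B = \ell B$, matching the cumulative supply.

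The main obstacle is the converse: forcing a gapless schedule to align with the supply boundaries. I plan to argue that no job~$J$ can start in $[(\ell-1)B, \ell B)$ and end after~$\ell B$. Suppose such a crossing job~$J$ started at~$\tau \in [(\ell-1)B, \ell B)$ and finished at $\tau + p_J > \ell B$. Since the machine is single and the schedule gapless, the jobs completed before~$\tau$ together with~$J$ occupy $[0, \tau + p_J]$ without overlap, so their total processing time equals $\tau + p_J$; exploiting $p_j = a_j$, the total resource demand of the jobs started by time $t = \ell B - \varepsilon$, for sufficiently small $\varepsilon > 0$, is therefore at least $\tau + p_J > \ell B$. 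But only the supplies at $0, B, \ldots, (\ell-1)B$ have arrived by~$t$, giving a cumulative supply of exactly $\ell B$, which contradicts feasibility. Ruling out boundary crossings partitions the jobs into blocks $T_1, \ldots, T_k$ with $T_\ell \subseteq [(\ell-1)B, \ell B]$; gaplessness combined with $\sum_j p_j = kB$ then forces $\sum_{j \in T_\ell} p_j = B$ for every~$\ell$, which, translated back to bins, produces the desired packing and establishes \wonehardness{} parameterized by~$q$ under unary encoding.
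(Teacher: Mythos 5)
Your proposal is correct and follows essentially the same route as the paper's proof: the identical reduction from \textsc{Unary Bin Packing} ($p_j = a_j = s_j$, $k$ supplies of $B$ units at times $(\ell-1)B$), the same easy direction, and the same boundary-crossing argument exploiting that in a gapless schedule with $p_j = a_j$ the cumulative demand of started jobs would exceed the cumulative supply if a job straddled a supply date. No gaps.
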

\begin{proof}
 We reduce from \textsc{Unary Bin Packing} parameterized by the number~$k$ of
 bins, which is known to be W[1]-hard \cite{JKMS13}. Given an instance~$I$ of
 \textsc{Unary Bin Packing} with~$k$ bins, each of size~$B$ and~$n$
 objects~$o_1,o_2,\ldots o_n$ of size~$s_1,s_2,\ldots s_n$ such
 that~$\sum_{i=1}^{n} s_i = k B$, we construct an instance~$I'$
 of~$1|\nr=1|\cmax$ as follows. We denote the set of all objects by~$O$.

 First, for each object~$o_i \in O$, we define a job~$J_i=(p_i,a_i)$ such that $p_i =
 a_{i} = s_i$; we denote the set of all jobs by~$\mathcal{J}$.
 Next, we construct $k$~supply dates~$q_i=(u_i,\tilde{b}_{i})$, with~$u_i = (i-1)B$
 and~$\tilde{b}_{i} = B$ for each~$i \in [k]$.
 This way we obtain an instance~$I'$ of~$1|\nr=1|\cmax$.

 It remains to show that~$I$ is a yes-instance if and only if~$I'$ is an
 instance with~$\cmax = kB$. To this end, suppose first that~$I$ is a
 yes-instance. Then, there is a partition of the objects into $k$~bins such that
 each bin is (exactly) full. Formally, there are~$k$ sets~$S_1,S_2,\ldots S_k$
 such that~$\bigcup_i S_i = O$, $S_i \cap S_j = \emptyset$ for all~$i \neq j$,
 and~$\sum_{o_i \in S_j} s_i = B$ for all~$j$.
 Hence we schedule all jobs~$j_i$ with~$o_i \in S_1$ between time~$0 = u_1$ and~$B = u_2$.
 Following the same procedure, we schedule all jobs corresponding to objects
 in~$S_i$ between time~$u_i$ and~$u_{i+1}$ where~$u_{i+1} = i B$.
 Since~$\sum_{i=1}^{n} p_i = k B$, we conclude that~$\cmax = kB$.

 Now suppose that~$\cmax = kB$.
 Assume towards a contradiction that there is an optimal schedule in which some
 job~$J_i \in \mathcal{J}$ starts at some time~$t$ such that~$t < u_\ell < t + p_i$
 for some~$\ell \in [k]$.
 Let~$S$ be the set of all objects that are scheduled before~$J_i$.
 Since~$\cmax = kB$, it follows that at each point of time until~$t$,
 there is some job scheduled at this time.
 Thus, since~$p_h = a_{h}$ for all jobs~$J_h$, it follows that~$\sum_{J_h \in S}
 a_{h} = \sum_{J_h \in S} p_{h} = t$. As a result, $\sum_{J_h \in S \cup \{J_i\}}
 a_{h} = t + a_{i} = t + p_i > u_\ell = (\ell-1) B = \sum_{h=1}^{\ell-1}
 \tilde{b}_{h}$; a contradiction to~$j_i$ starting before~$u_\ell$.
 Hence,  there is no job that starts before some~$u_\ell$ and ends after it.
 Thus, the jobs can be partitioned into ``phases,'' that is, there are~$k$
 sets~$T_1,T_2,\ldots, T_k$ such that~$\bigcup_h T_h = \mathcal{J}$,
 $T_i \cap T_{i'} = \emptyset$ for all~$i \neq i'$,
 and~$\sum_{J_h \in T_g} p_h = B$ for all~$g$.
 This corresponds to a bin packing where~$o_g$ belongs to bin~$h$ if and only
 if~$J_g \in T_h$.
\end{proof}

The theorems presented in this section show that our problem is
(presumably) not fixed-parameter tractable either with respect to the number of
supply dates or with respect to the maximum number of resources per supply.
However, as we show in the following section, combining these two parameters
allows for fixed-parameter tractability. Furthermore, we present other algorithms
that, partially, allow us to successfully bypass the hardness presented above.

\section{(Parameterized) Tractability}
Our search for efficient algorithms for~\probMCSLong{} starts
with an introductory part presenting two lemmata exploiting structural
properties of problem solutions. Afterwards, we employ the
lemmata and provide several tractability results, including polynomial-time
solvability for one specific case.

\subsection{Identifying Structured Solutions}
A solution to the \probMCSLong{} is an ordered list of jobs to be executed on the
machine(s). Additionally, the jobs need to be associated with their starting times.
The starting times have to be
chosen in such a way that no job starts when the machine is still processing
another scheduled job and that each job requirement is met at the moment of
starting the job. We show that, in fact, given an order of jobs, one can always
compute the times of starting the jobs minimizing the makespan in polynomial
time. Formally, we present in~\Cref{lem:gapless} a polynomial-time Turing reduction
from~$1 | \nr=r | \cmax$ to~$ 1, \NI| \nr=r | -$.
The crux of this lemma is to observe that there always exists an optimal
solution to $1 | \nr=r | \cmax$ that is decomposable into two parts. First,
when the machine is idling, and second, when the machine is
continuously busy until all jobs are processed.

\begin{lemma}
 \label{lem:gapless}
 There is a polynomial-time Turing reduction from~$1|\nr=r|\cmax$
 to~\noGapProblem{}.
\end{lemma}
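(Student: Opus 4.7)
The plan is to establish a structural property---namely, that some optimal schedule for $1|\nr=r|\cmax$ consists of an initial idle interval $[0, t^*)$ followed by a gapless block of jobs on $[t^*, \cmax]$---and then reduce computing the optimum to at most $q$ calls of a $\noGapProblem$ oracle, one per candidate $t^*$ drawn from the supply dates $u_1, \ldots, u_q$.

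First I would prove the structural claim by a uniform-shift argument. Take any optimal schedule and pick any inner gap $(t_1, t_2)$, that is, a maximal interval in which the machine idles after the first job has started. Shift every job that starts strictly before $t_1$ later by exactly $t_2 - t_1$. The shifted prefix then ends at $t_2$, the (unchanged) suffix still starts at $t_2$, the makespan is unchanged, and one gap has been eliminated. Feasibility survives on a per-resource basis: every shifted job's new start time is later than its old one, so the cumulative supply available to it at its start can only grow; the suffix jobs see exactly the same set of prior consumers and the same cumulative supplies as before. Iterating on the remaining inner gaps yields an idle-then-gapless optimal schedule.

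Next I would argue that one may assume $t^* \in \{u_1, \ldots, u_q\}$. Let $u_\ell$ be the largest supply date with $u_\ell \le t^*$. Since no resources arrive during $(u_\ell, t^*)$, translating the whole gapless block earlier by $t^* - u_\ell$ preserves every feasibility inequality (each job's prefix of available supplies is unchanged) and strictly decreases the makespan whenever $t^* > u_\ell$. Optimality therefore forces $t^*$ onto a supply date.

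Finally, the Turing reduction enumerates each candidate $t^* \in \{u_1, \ldots, u_q\}$ and builds an instance of $\noGapProblem$ by translating time so that $t^*$ becomes the new origin: each supply arriving at $u \le t^*$ is accumulated at the new time~$0$, and each supply at $u > t^*$ is moved to $u - t^*$. If the oracle reports a feasible non-idling schedule, the reduction records the candidate makespan $t^* + \sum_{j \in \J} p_j$; returning the minimum over all $t^*$ gives the optimum. This uses $q \le |I|$ oracle calls, each on an instance of size polynomial in~$|I|$. The main obstacle I anticipate is verifying that the shifting argument survives in the multi-resource setting, but since every shift is a uniform translation in time, the feasibility check decomposes resource by resource and reduces to the single-resource bookkeeping.
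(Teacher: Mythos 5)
Your first step---closing inner gaps by shifting the prefix of jobs later in time---is correct and is essentially the paper's own observation: pushing jobs later never hurts because cumulative supplies only grow, so some optimal schedule is idle-then-gapless. The reduction breaks, however, at the second step. The claim that the start $t^*$ of the gapless block may be assumed to lie in $\{u_1,\dots,u_q\}$ is false, and the justification reveals why: when you translate the whole block earlier by $t^*-u_\ell$, it is only the \emph{first} job whose ``prefix of available supplies is unchanged'' (nothing arrives in $(u_\ell,t^*)$); every later job of the block also moves earlier, and a supply may arrive strictly between its new and old start times, so its feasibility inequality can fail. Concretely, take one resource with $\tilde b_1=1$ at $u_1=0$ and $\tilde b_2=1$ at $u_2=10$, and two unit-length jobs each requiring one unit. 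The optimal schedule idles until time~$9$, runs the jobs on $[9,10)$ and $[10,11)$, and has makespan~$11$; starting the block at $u_1=0$ is infeasible and starting it at $u_2=10$ gives makespan~$12$. Your reduction, which only queries the oracle for $t^*\in\{0,10\}$, would therefore report~$12$ instead of~$11$. In general the optimal $t^*$ is determined by constraints of the form $t^*\ge u_\ell - (\text{partial sum of processing times})$, so it depends on the jobs, not just on the supply dates.

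The fix is the paper's: observe that feasibility of a gapless block starting at time $g$ is \emph{monotone} in $g$ (if some order works at $g$, the same order works at any $g'>g$), and that $g=u_{\max}$ is always feasible by Assumption~1. Hence the set of feasible start times is an interval $[g^*,\infty)$ with $g^*\le u_{\max}$, and one can binary-search for $g^*$ over $\{0,1,\dots,u_{\max}\}$, calling the oracle $O(\log u_{\max})$ times on the instance with all supply times decreased by $g$ (merging non-positive ones into time~$0$). This keeps the number of oracle calls polynomial in the input size even though $u_{\max}$ itself may be exponential in it.
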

\begin{proof}
 Assuming that we have an oracle for~\noGapProblem{}, we describe an
 algorithm solving~$1|\nr=r|\cmax$ that runs in polynomial time.

 We first make a useful observation about
 feasible solutions to the original problem. Let us consider some feasible
 solution~$\sigma$ to~$1|\nr=r|\cmax$ and let~\orderedlistingof{g}{n} be the idle
 times before processing, respectively, the jobs~\orderedlistingof{J}{n}. Then,~$\sigma$
 can be transformed to another schedule~$\sigma'$ with the same makespan as~$\sigma$ and
 with idle times~\orderedlistingof{g'}{n} such that~$g'_2 = g'_3 \ldots g'_n=0$
 and~$g'_1=\sum_{t \in [n]}g_t$. Intuitively, $\sigma'$ is a scheduling in which
 the idle times of~$\sigma$ are all ``moved'' before the machine starts the
 first scheduled job. It is straightforward to see that in~$\sigma'$ no jobs are
 overlapping. Furthermore, each job according to~$\sigma'$ is processed at
 earliest at the same time as it is processed according to~$\sigma$. Thus,
 because there are no ``negative'' supplies and the order of processed jobs is
 the same in both~$\sigma$ and~$\sigma'$, each job's resource request is met in
 scheduling~$\sigma'$. 

 Using the above observation, the algorithm solving~$1|\nr=r|\cmax$ using an
 oracle for \noGapProblem{} problem works as follows. First, it guesses a
 starting gap's duration~$g \leq u_{\max}$ and then calls an oracle
 for~$\noGapProblem$ subtracting~$g$ from each supply time (and merging all
 non-positive supply times to a new one arriving at time zero) of the
 original $1|\nr=r|\cmax$~instance. For each value of~$g$, the algorithm adds~$g$
 to the oracle's output and returns the minimum over all these sums.

 Basically, the algorithm finds a scheduling with the smallest possible makespan
 assuming that the idle time happens only before the first scheduled job is
 processed. Note that this assumption can always be satisfied by the initial observation.
 Because of the monotonicity of the makespan with
 respect to the initial idle time~$g$, the algorithm can perform binary
 search while searching for~$g$ and thus its running time is~$O(\log(u_{\max}))$.
\end{proof}

Let us further explain the crucial observation backing~\Cref{lem:gapless}
since we will extend it in the subsequent \Cref{lem:ordered}.
Assume that, for some instance of the \probMCSLong{}, there is some optimal schedule
where some job~$J$ starts being processed at some time~$t$ (in particular, the
resource requirements of~$J$ are met at~$t$).
If, directly after the job the machine idles for some time, 
then we can
postpone processing~$J$ to the latest moment which still guarantees that~$J$
is ended before the next job is processed.
Naturally, at the new starting time of~$J$ we can
only have more resources than at the old starting time. Applying this
observation exhaustively produces a solution that is clearly separated into
idling time and busy time.

We will now further exploit the above observation beyond
only ``moving'' jobs without changing their mutual order. We first define
a~\emph{domination relation} over jobs; intuitively, a job dominates another
job if it is not shorter and at the same time it requires not more resources.
\begin{definition}\label{def:domination}
 A job~$J_j$ \emph{dominates} a job~$J_{j'}$ (written $J_j \le_D J_{j'}$)
 if~$p_j \geq p_{j'}$ and~$a_{i,j} \leq a_{i,j'}$ for all~$i \in \R$.
\end{definition}
When we deal with non-idling schedules, for a pair of jobs $J_j$
and~$J_{j'}$ where~$J_{j}$ dominates~$J_{j'}$, it is better (or at least not worse) to
schedule~$J_{j}$ before~$J_{j'}$.
Indeed, since among these two, $J_{j}$'s requirements are not greater and its processing time
is not smaller, surely after the machine stops processing~$J_j$ there will be at least as many
resources available as if the machine had processed~$J_{j'}$.
We formalize this observation in the following lemma.
\begin{lemma}\label{lem:ordered}
 For an instance of~$1, \NI| \nr | - $, let $<_D$ be an asymmetric subrelation of~$\le_D$. 
 There always is a feasible schedule where for every
 pair~$J_{j}$ and~$J_{j'}$ of jobs it holds that if $J_{j} <_D  J_{j'}$, then~$J_{j}$ is processed
 before~$J_{j'}$.
\end{lemma}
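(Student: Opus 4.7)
The plan is to take an arbitrary feasible non-idling schedule and, via a single exchange step combined with an extremal argument, transform it into one that respects $<_D$. First, I would extend $<_D$ to a total order on $\J$; this is possible because the lemma is only meaningful when $<_D$ is acyclic, as otherwise the desired ordering of jobs is itself unsatisfiable. Letting $w : \J \to \{1, \ldots, n\}$ be the resulting ranking, so that $J_j <_D J_{j'}$ implies $w(J_j) < w(J_{j'})$, I would define the potential $\Phi(\sigma) = \sum_j \mathrm{pos}_\sigma(J_j) \cdot w(J_j)$ on feasible non-idling schedules and pick one, $\sigma^*$, that maximizes~$\Phi$.

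The core of the argument is then a single exchange step. Suppose toward a contradiction that $\sigma^*$ contains a $<_D$-inversion, that is, a pair $J_j <_D J_{j'}$ with $\mathrm{pos}_{\sigma^*}(J_{j'}) = i < k = \mathrm{pos}_{\sigma^*}(J_j)$, and form $\sigma'$ by swapping just these two jobs. Since the total processing time over positions $i$ through $k$ is unchanged, start times of positions outside $[i, k]$ agree in $\sigma^*$ and $\sigma'$, while start times of positions $i+1, \ldots, k$ shift later by $p_j - p_{j'} \ge 0$. At position $i$ the start time is unchanged but $J_j$ replaces $J_{j'}$ in the prefix, so using $a_{r,j} \le a_{r,j'}$ for every resource $r \in \R$ the cumulative demand only drops; at any intermediate position $i < m < k$ the prefix differs only in the same replacement and its start time has moved later, so demand does not rise while supply does not fall; and at position $k$ the prefix contains exactly the same multiset of jobs as in $\sigma^*$, so the demand matches a previously feasible constraint that now enjoys even more supply. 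Hence $\sigma'$ is feasible, and a direct computation yields $\Phi(\sigma') - \Phi(\sigma^*) = (k-i)\bigl(w(J_{j'}) - w(J_j)\bigr) > 0$, contradicting the maximality of $\sigma^*$.

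The main obstacle I anticipate is precisely this feasibility verification at the intermediate positions $i < m < k$, where one must combine two distinct monotonicities: that the cumulative resource supply is non-decreasing in time, and that replacing $J_{j'}$ by the dominating job $J_j$ in the prefix is a pointwise non-increase of cumulative demand from every resource. Once this exchange step is in place, the extremal choice of $\sigma^*$ immediately rules out any remaining $<_D$-inversion and hence yields the desired schedule.
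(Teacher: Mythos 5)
Your proof is correct and follows essentially the same route as the paper's: a single exchange of an inverted pair $J_{j}\le_D J_{j'}$, whose feasibility is verified exactly as the paper does by combining the later start times (from $p_j\ge p_{j'}$) with the pointwise smaller prefix demand (from $a_{i,j}\le a_{i,j'}$). Your potential function $\Phi$ and the acyclicity caveat merely make explicit the termination of the repeated swapping that the paper leaves implicit, which is a welcome refinement rather than a different argument.
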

\newcommand{\Jin}{\ensuremath{\J_{\textrm{in}}}}
\newcommand{\Jout}{\ensuremath{\J_{\textrm{out}}}}
\begin{proof}
 Let~$\sigma$ be some feasible schedule.
 Consider a pair~$(J_{j},J_{j'})$ of jobs such that~$J_{j}$ is scheduled after~$J_{j'}$
 and~$J_{j}$ dominates~$J_{j'}$, that is, $p_j \geq p_{j'}$, $a_{ij} \leq a_{ij'}$
 for all~$i \in \R$.
 Denote by~$\sigma'$ a schedule emerging from continuously scheduling all
 jobs in the same order as in~$\sigma$ but with jobs~$J_{j}$ and~$J_{j'}$ swapped.
 Assume that~$\sigma'$ is not a feasible schedule.
 We show that each job in~$\sigma'$ meets its resource requirements, thus contradicting
 the assumption and proving the lemma.
 We distinguish between the set of jobs~$\Jout$ that are scheduled
 before~$J_{j'}$ in~$\sigma$ or that are scheduled after~$J_{j}$ in~$\sigma$ and
 jobs~$\Jin$ that are scheduled between~$j'$ and~$j$ in~$\sigma$
 (including~$j'$ and~$j$).
 Observe that since all jobs in~$\Jin$ are scheduled without the machine idling,
 it holds that all jobs in~$\Jout$ are scheduled exactly at the same times
 in both~$\sigma$ and~$\sigma'$.
 Additionally, since the total number of resources consumed by jobs in~$\Jin$
 in both~$\sigma$ and~$\sigma'$ is the same, the resource requirements for each
 job in~$\Jout$ is met in~$\sigma'$.
 It remains to show that the requirements of all jobs in~$\Jin$ are still met
 after swapping.
 To this end, observe that all jobs except for~$J_{j'}$ still meet the
 requirements in~$\sigma'$ as~$J_{j}$ dominates~$J_{j'}$ (i.e.,\ $J_{j}$~requires at most as
 many resources and has at least the same processing time as~$J_{j'}$).
 Thus, each job in~$\Jin$ except for~$J_{j'}$ has at least as many resources available
 in~$\sigma'$ as they have in~$\sigma$.
 Observe that~$J_{j'}$ is scheduled later in~$\sigma'$ than~$J_{j}$ was scheduled in~$\sigma$.
 Hence, there are also enough resources available in~$\sigma'$ to process~$J_{j'}$.
 Thus,~$\sigma'$ is feasible, a contradiction.
\end{proof}
\let\Jin\undefined
\let\Jout\undefined
 
Note that in the case of two jobs $J_{j}$ and $J_{j'}$ dominating each other
($J_{j} \le_D J_{j'}$ and~${J_{j'} \le_D J_{j}}$), \Cref{lem:ordered}
allows for either of them to be processed before the other one.

\subsection{Applying Structured Solutions}

We start with polynomial-time algorithms that applies both~\Cref{lem:gapless}
and~\Cref{lem:ordered} to solve special cases of the~\probMCSLong{} where each two
jobs can be compared according to the domination relation
(\Cref{def:domination}). Recall that if this is the case, then~\Cref{lem:ordered}
almost exactly specifies the order in which the jobs should be scheduled.
\begin{theorem}\label{thm:unit-P}
 $1, \NI | nr | -$ and~$1 | \nr | \cmax$ are solvable in, respectively, cubic and
	quadratic time if the domination relation is a weak order\footnote{A weak order of elements ranks elements such that each two objects are comparable but different objects can be tied.} on a set of jobs. In
 particular, for the time~$\umax$ of the last supply, $1|\nr=1,\ p_j = 1 | \cmax$
and $1|\nr=1,\ a_j = 1 | \cmax$ are solvable in $O(n \log n \log \umax)$ time and
 $1, \NI|\nr=1,\ p_j = 1| -$ and $1, \NI|\nr=1,\ a_j = 1| -$ are solvable in $O(n
 \log n)$~time.
\end{theorem}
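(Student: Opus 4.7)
The plan is to exploit both Lemma 1 and Lemma 2 to reduce the problem to a sort-and-sweep procedure. Under the hypothesis that $\le_D$ is a weak order on the set of jobs, the equivalence classes of the relation consist of jobs with identical processing times and identical demand vectors (otherwise $\le_D$ would be a partial but not a weak order). Lemma 2 then fixes the order of processing up to within-class permutations, and such permutations preserve feasibility verbatim. Consequently, any linear extension of the weak order (any tie-breaking total order) is a valid candidate, and it is feasible whenever any ordering is.

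For $1, \NI | \nr | -$ I would first produce such a linear extension by sorting the jobs, where comparing two jobs takes time proportional to the number of resources. I would then sweep the resulting sequence once, maintaining, for each resource, the running demand and the running supply released by the current time, and check that demand never exceeds supply; the answer is YES precisely when this predicate holds throughout. For $1 | \nr | \cmax$ I would combine this routine with Lemma 1: by that lemma there is an optimal schedule consisting of an initial idle phase of some length $g$ followed by a non-idling suffix, so one binary-searches $g \in [0, u_{\max}]$ and invokes the non-idling feasibility check on the instance whose supply times have been shifted by $-g$ (with all supplies landing at or before the new origin coalesced into one). Crucially, the sort has to be done only once and can be reused across candidate values of $g$. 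Accounting for the cost of sorting, shifting and sweeping then yields the claimed quadratic bound for $1 | \nr | \cmax$ and the cubic bound for the non-idling variant (where a per-comparison factor proportional to $r$ enters).

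For the specialized cases observe that $p_j = 1$ collapses $\le_D$ to the total preorder induced by $a_j$, while $a_j = 1$ with $\nr = 1$ collapses it to the total preorder induced by $p_j$; in both cases the weak-order hypothesis is automatic and only one key needs to be compared. Sorting by that key takes $O(n \log n)$ and the sweep is linear, which matches the claimed $O(n \log n)$ bound for the two non-idling variants; applying Lemma 1 on top contributes the additional $\log u_{\max}$ factor for the corresponding makespan variants.

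The main obstacle, in my view, is not the algorithm itself but the correctness argument: one must show that the sorted schedule is feasible whenever any feasible schedule exists. This is exactly where Lemma 2 does the heavy lifting, since it can be applied to adjacent violating pairs in a hypothetically feasible schedule to transform it into the sorted one without losing feasibility; a failure of the sorted schedule therefore certifies infeasibility of every schedule. A smaller nuisance is the bookkeeping inside the Lemma 1 reduction when shifting supplies, but this is entirely routine.
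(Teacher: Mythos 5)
Your proposal is correct and follows essentially the same route as the paper's proof: sort the jobs into a linear extension of the domination weak order, verify feasibility of that single candidate schedule with one sweep (correctness via \Cref{lem:ordered}, applied to out-of-order pairs in a hypothetical feasible schedule), and lift to the $\cmax$ variants via \Cref{lem:gapless} with a binary search over the initial idle time~$g$, which contributes the $\log \umax$ factor. Your additional observations (that ties in the weak order force identical parameters, and that the sort can be reused across values of~$g$) are harmless refinements rather than a different approach.
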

\begin{proof}
 We first show how to solve~$1, \NI|\nr=1,\ p_j = 1| -$. At the beginning, we
 order the jobs increasingly with respect to their requirements of the
 resource arbitrarily ordering jobs with equal requirements.
 Then, we simply check whether scheduling the jobs in the computed order yields
 a feasible schedule, that is, whether the resource requirement of each job is met. If the
 check fails, then we return ``no,'' otherwise we report ``yes.'' The algorithm is
 correct due to~\Cref{lem:ordered} which, adapted to our case, says that there
 must exist an optimal schedule in which jobs with smaller resource requirements
 are always processed before jobs with bigger requirements. It is
 straightforward to see that the presented algorithm runs in~$O(n \log
 n)$~time.

 To extend the algorithm to~$1|\nr=1,\ p_j = 1 | \cmax$, we
 apply~\Cref{lem:gapless}. As described in detail in the proof
 of~\Cref{lem:gapless}, we first guess the idling-time~$g$ of the machine at the
 beginning. Then, we run the algorithm for~$1, \NI|\nr=1,\ p_j = 1| -$ pretending
 that we start at time~$g$ by shifting backwards by~$g$ the times of all
 resource supplies. Since we can guess~$g$ using binary search in a range
 from~$0$ to the time of the last supply, such an adaptation yields a
 multiplicative factor of~$O(\log u_{\max})$ for the running time of the algorithm
 for~$1, \NI|\nr=1,\ p_j = 1| -$. The correctness of the algorithm follows
 immediately from the proof of~\Cref{lem:gapless}.

 The proofs for~$1|\nr=1,\ p_j = 1 | -$ and $1|\nr=1,\ a_j = 1 | \cmax$ as well as
 the algorithms for these problems are analogous to those of~$1|\nr=1,\ p_j = 1 |
 -$ and $1|\nr=1,\ p_j = 1 | \cmax$.

 The aforementioned algorithms need only a small modification to work for the
 cases of~$1|\nr| -$ and ~$1|\nr| \cmax$ in which the domination relation is a
 weak order on a set of jobs. Namely, instead of sorting the jobs, one needs to
 find a weak order over the jobs. This is doable in quadratic time by comparing
 all pairs of jobs followed by checking whether the comparisons induce a
 weak order; thus we obtain the claimed running time. The obtained algorithm is
 correct by the same argument as the other above-mentioned cases.
\end{proof}

Importantly, it is simple (requiring at most~$O(n^2)$~comparisons) to
identify the cases for which the above algorithm can be applied successfully.

If the given jobs cannot be weakly ordered by domination,
then the problem becomes NP-hard as shown in~\Cref{thm:b-pNP-hard}.
This is to be expected since when jobs appear which are incomparable with respect to domination, then
one cannot efficiently decide which job, out of two, to
schedule first: the one which requires fewer resource units but has a shorter
processing time, or the one that requires more resource units but has a longer
processing time. Indeed, it could be the case that sometimes one may want to
schedule a shorter job with lower resource consumption to save resources for
later, or sometimes it is better to run a long job consuming, for example, all
resources knowing that soon there will be another supply with sufficient
resource units. Since NP-hardness probably excludes polynomial-time
solvability, we turn to a parameterized complexity analysis to get around the
intractability.

The time~$u_{\max}$ of the last supply seems a promising parameter.
We
show that it yields fixed-parameter tractability.
Intuitively, we demonstrate that the problem is tractable when the time until
all resources are available is short.

\begin{theorem} \label{thm:umax-fpt}
 $1, \NI|\nr=1|\cmax$ parameterized by the time~$u_{\max}$ of the
 last supply is fixed-parameter tractable and can be solved in~$O(2^{\umax}
 \cdot n + n \log n)$ time.
\end{theorem}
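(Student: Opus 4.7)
The plan is to treat the problem as pure feasibility for a non-idling schedule, since the makespan of such a schedule starting at time~$0$ is fixed at $\sum_j p_j$. Two structural facts drive the algorithm. First, start times in a non-idling schedule are cumulative sums of integer processing times (each at least~$1$), so at most $u_{\max}$ jobs start strictly before $u_{\max}$; call these the \emph{prefix jobs}. Second, every other job (the \emph{suffix}) starts at time at least $u_{\max}$ and hence sees the full cumulative supply, which by Assumption~\ref{asp:1}(i) already suffices for the suffix's total demand regardless of its ordering. Consequently the whole schedule is feasible if and only if some ordering of the prefix is feasible.

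Given this, the algorithm enumerates templates for the prefix. A template is a subset $T \subseteq \{1, 2, \ldots, u_{\max}\}$, interpreted as the set of start times of prefix jobs other than the first (which always starts at~$0$); this gives $2^{u_{\max}}$ templates. Writing $T \cup \{0\} = \{0 = t_0 < t_1 < \cdots < t_{k-1}\}$, the template forces the job at position $i < k$ to have processing time exactly $t_i - t_{i-1}$ and the job at position $k$ to have processing time at least $u_{\max} - t_{k-1}$ (so that the next job starts at or past $u_{\max}$). For each template, feasibility is checked in $O(n)$ time after a global $O(n \log n)$ preprocessing that sorts the jobs by processing time and, within each class, by $a_j$. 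Iterating through the positions in time order, the algorithm greedily picks the still-unused job with smallest $a_j$ that matches the position's processing-time requirement, while a parallel sweep checks that after each position the cumulative resource demand respects $\tilde{b}_\ell$-summed for every supply time $u_\ell \leq u_{\max}$. \Cref{lem:ordered} certifies the greedy choice: among equal-processing-time jobs the one with smaller $a_j$ dominates. If at least one template admits a feasible assignment, accept; otherwise reject.

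The main obstacle will be handling the flexibility of the last prefix position~$k$, whose processing time is only lower-bounded rather than fixed. I plan to resolve this by an exchange argument: the rule \emph{pick the smallest-$a_j$ unused job with processing time at least $u_{\max} - t_{k-1}$} is without loss of generality, since swapping such a job for any other candidate leaves the cumulative resource demand at all earlier supply times unchanged while only relaxing the slack at $u_{\max}$, so greedy is never worse. Putting the pieces together yields the claimed running time of $O(2^{u_{\max}} \cdot n + n \log n)$, with the additive $n \log n$ coming from preprocessing.
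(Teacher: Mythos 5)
Your proposal is correct and follows essentially the same route as the paper's proof: enumerate the $2^{u_{\max}}$ subsets of $\{1,\ldots,u_{\max}\}$ as candidate start-time patterns for the jobs beginning before $u_{\max}$, greedily assign to each forced processing-time slot the unused job with smallest resource requirement (justified by \Cref{lem:ordered}), treat the last such slot via a processing-time lower bound $u_{\max}-t_{k-1}$, and schedule the remaining jobs arbitrarily afterwards. The only cosmetic difference is that you make the suffix's automatic feasibility via Assumption~\ref{asp:1}(i) explicit, whereas the paper simply checks total supply at the end; the algorithm and running-time analysis coincide.
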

\begin{proof}
 We first sort all jobs by their processing time in~$O(n)$ time using bucket
 sort. We then sort all jobs with the same processing time by their
 resource requirement in overall~$O(n \log n)$ time.
 We then iterate over all subsets~$R$ of~$\{1,2,\ldots,\umax\}$. We
 will refer to the elements in~$R$ by~$r_1,r_2,\ldots,r_k$, where~$k = |R|$ and~$r_i < r_j$
 for all~$i < j$.
 For simplicity, we will use~$r_0 = 0$. For each~$r_i$
 in ascending order, we check whether there is a job with a processing
 time~$r_i - r_{i-1}$ that was not scheduled before and if so, then we schedule
 the respective job that is first in each bucket (the job with the lowest
 resource requirement). Next, we check whether there is a job left that
 can be scheduled at~$r_k$ and which has a processing time at least~$\umax -
 r_k$. Finally, we schedule all remaining jobs in an arbitrary order and check
 whether the total number of resources suffices to run all jobs.

 We will now prove that there is a valid gapless schedule if and only if all of
 these checks are met. Notice that if all checks are met, then our algorithm
 provides a valid gapless schedule.
 Now assume that there is a valid gapless schedule.
 We will show that our algorithm finds a (possibly different) valid gapless schedule.
 Let, without loss of generality, $J_{j_1},J_{j_2},\ldots,J_{j_n}$ be a valid
 gapless schedule and let~$j_k$ be the index of the last job that is
 scheduled latest at time~$\umax$. We now focus on the iteration where~$R =
 \{0, p_{j_1}, p_{j_1} + p_{j_2}, \ldots, \sum_{i=1}^k p_{j_i}\}$.
 If the algorithm schedules the jobs~$J_{j_1},J_{j_2},\ldots, J_{j_k}$,
 then it computes a valid gapless schedule and all checks are met.
 Otherwise, it schedules some jobs differently but, by construction, it always
 schedules a job with processing time~$p_{j_i}$ at  position~$i \leq k$.
 Due to \cref{lem:ordered} the schedule computed by the algorithm is also valid.
 Thus the algorithm computes a valid gapless schedule
 and all checks are met.

 It remains to analyze the running time. The sorting steps in the beginning
 take~$O(n \log n)$ time. There are~$2^{\umax}$ iterations for~$R$, each
 taking~$O(n)$ time. Indeed, we can check in constant time for each~$r_i$ which
 job to schedule and this check is done at most~$n$ times (as afterwards
 there is no job left to schedule). Searching for the job that is scheduled at
 time~$r_k$ also takes~$O(n)$ time as we can iterate over all remaining jobs and
 check in constant time whether it fulfills both requirements.
\end{proof}

Another possibility for fixed-parameter tractability via parameters
measuring the resource supply structure comes from combining the
parameters~$q$ and~$\bmax$.
Although both parameters alone yield intractability, combining them
gives fixed-parameter tractability in an almost trivial way:
By Assumption 1, every job requires at least one resource, so~${\bmax \cdot q}$ is an upper bound for the number of jobs.
Hence, with this parameter combination, we can try out all possible
schedules without idling (which by \Cref{lem:gapless} extends to
solving to $1, \NI|\nr=1|\cmax$).

Motivated by this, we replace the parameter $\bmax$~by the presumably much
smaller (and hence practically more useful) parameter~$\amax$.
We consider scenarios with only few resource supplies and jobs that
require only small units of resources as practically relevant.
Next, \Cref{thm:q-bmax-fpt} employs the technique of Mixed Integer
Linear Programming (MILP)~\cite{journal-BFNST20} to positively answer the question
of fixed-parameter tractability for the combined parameter~${q + \amax}$.

\begin{theorem} \label{thm:q-bmax-fpt}
 $1, \NI|\nr=1|\cmax$ is fixed-parameter tractable for the combined
 parameter $q+\amax$, where $q$~is the number of supplies and $\amax$~is the maximum
 resource requirement per job.
\end{theorem}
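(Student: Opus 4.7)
The plan is to formulate the existence of a feasible non-idling schedule as a Mixed Integer Linear Program (MILP) with only $O(q \cdot \amax)$ integer variables, and then invoke the FPT algorithm for MILP feasibility parameterized by the number of integer variables. By~\Cref{lem:ordered}, I may restrict attention to schedules that process the jobs with any fixed resource requirement $a \in \{1, \ldots, \amax\}$ in non-increasing order of processing time, so I partition the jobs into at most $\amax$ groups by requirement and within each group sort them as $p^a_1 \geq p^a_2 \geq \dots \geq p^a_{n_a}$. The partial sums $f_a(k) := \sum_{i=1}^{k} p^a_i$ are concave in $k$, and their piecewise-linear interpolation $\hat{f}_a(x) := \min_{0 \leq k \leq n_a - 1} \bigl[f_a(k) + p^a_{k+1}(x - k)\bigr]$ is a concave function agreeing with $f_a$ at every integer point.

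For each of the $2^q$ subsets $S \subseteq \{1, \ldots, q\}$ specifying which intervals $[u_\ell, u_{\ell+1})$ contain the start of at least one scheduled job, I solve an MILP with integer variables $X_{\ell, a}$ (the cumulative number of group-$a$ jobs started before $u_{\ell+1}$) and continuous auxiliary variables $F_{\ell, a}$ meant to realize $f_a(X_{\ell, a})$, for $\ell \in \{1, \ldots, q\}$ and $a \in \{1, \ldots, \amax\}$. The constraints are: $0 \leq X_{\ell-1, a} \leq X_{\ell, a}$ and $X_{q, a} = n_a$; $\sum_{a}(X_{\ell, a} - X_{\ell-1, a}) \geq 1$ for $\ell \in S$ together with $X_{\ell, a} = X_{\ell-1, a}$ for $\ell \notin S$; resource feasibility $\sum_{a} a \cdot X_{\ell, a} \leq b_\ell$; the $n_a$ tangent-line inequalities $F_{\ell, a} \leq f_a(k) + p^a_{k+1}(X_{\ell, a} - k)$ for $k = 0, \ldots, n_a - 1$, which enforce $F_{\ell, a} \leq \hat{f}_a(X_{\ell, a})$; and the timing constraint $\sum_{a} F_{\ell-1, a} \geq u_\ell$ for $\ell \in S$. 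Since $\hat{f}_a$ is tight at integer arguments, this last constraint is effectively $\sum_{a} f_a(X_{\ell-1, a}) \geq u_\ell$, expressing that the first job of every non-empty phase $\ell$ begins no earlier than $u_\ell$.

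For correctness, an MILP solution induces a non-idling schedule by dispatching, phase by phase, each group's $X_{\ell, a} - X_{\ell-1, a}$ jobs in their sorted order; the resource constraint ensures that cumulative demand at supply time $u_\ell$ is bounded by $\sum_{a} a \cdot X_{\ell, a} \leq b_\ell$ even when some jobs ``overflow'' into a later supply interval, since such overflow only lowers actual demand and supplies are monotone. Conversely, any feasible non-idling schedule---sorted within groups via~\Cref{lem:ordered}---immediately yields MILP-feasible values of $X$ and $F$. The MILP has $q \cdot \amax$ integer variables and polynomially many continuous variables and constraints, so it can be solved in $g(q \cdot \amax) \cdot |I|^{O(1)}$ time for some computable~$g$, and the $2^q$ outer enumeration keeps the entire algorithm FPT in $q + \amax$. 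The main obstacle is the nonlinear timing relation $C_{\ell-1} = \sum_{a} f_a(X_{\ell-1, a})$; concavity of $f_a$ rescues it, letting us linearly encode $f_a$ as an upper envelope of its $n_a$ tangent lines on the auxiliary continuous variables $F_{\ell, a}$, an encoding that is exact precisely at the integer arguments the MILP ultimately chooses.
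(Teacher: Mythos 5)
Your proposal is correct and follows essentially the same route as the paper's proof: both reduce the problem, via \Cref{lem:ordered}, to a mixed ILP over cumulative per-phase, per-requirement job counts with $O(q\cdot\amax)$ integer variables, enforce resource feasibility and the no-gap timing condition through the concave cumulative-processing-time functions, and invoke Lenstra's theorem. The only substantive difference is that the paper delegates those concave functions to the ILP-with-concave-transformations framework of \citet{journal-BFNST20}, whereas you linearize them directly with tangent-line cuts on continuous variables $F_{\ell,a}$ --- which is valid here because the timing constraint only needs the one-sided bound $F_{\ell,a}\le f_a(X_{\ell,a})$ together with $\sum_a F_{\ell-1,a}\ge u_\ell$ --- and you add a harmless (in fact unnecessary) $2^q$ enumeration of the non-empty phases that the paper's formulation avoids.
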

\begin{proof}
 Applying the famous theorem of \citet{Len83}, we describe an integer linear program
 that uses only $f(q,\amax)$ integer variables.
 \citet{Len83} showed that an (mixed) integer linear program is fixed-parameter tractable
 when parameterized by the number of integer variables (see also \citet{FT87} and \citet{Kan87}
 for later asymptotic running-time improvements).
 To significantly simplify the description of the integer program,
 we use an extension to integer linear programs that allows concave transformations
 on variables~\citep{journal-BFNST20}.
 
 Our approach is based on two main observations.
 First, by \cref{lem:ordered} we can assume that there is always
 an optimal schedule that is consistent with the domination order.
 Second, within a phase (between two resource supplies), every job can be arbitrarily reordered.
 Roughly speaking, a solution can be fully characterized by the number of jobs
 that have been started for each phase and each resource requirement.
 
 We use the following non-negative integer variables:
 \begin{enumerate}
  \item $x_{w,s}$ denoting the number of jobs requiring~$s$ resources
        started in phase~$w$,
  \item $x^{\Sigma}_{w,s}$ denoting the number of jobs requiring~$s$ resources
        started in all phases between $1$ and~$w$ (inclusive),
  \item $\alpha_w$ denoting the number of resources available in the beginning
        of phase~$w$,
  \item $d_w$ denoting the endpoint of phase~$w$, that is, the time when the last job started in phase~$w$ ends.
 \end{enumerate}
 
 Naturally, the objective is to minimize $d_q$. 
 First, we ensure that $x^{\Sigma}_{w,s}$ are correctly computed from $x_{w,s}$ by requiring
 $
  x^{\Sigma}_{w,s}= \sum_{w'=1}^{w} x_{w',s}.
 $
 Second, we ensure that all jobs are scheduled at some point.
 To this end, using~$\#_s$ to denote the number of jobs~$J_j$ with resource
 requirement~$a_j=s$, we add:
 $
   \forall s \in [\amax]: \sum_{w \in [q]} x_{w,s}=\#_s.
 $
 Third, we ensure that the~$\alpha_w$ variables are set correctly, by setting $\alpha_1=\tilde{b}_1$,
 and $\forall 2 \le w \le q:$
 $
  \alpha_w = \alpha_{w-1} + \tilde{b}_w - \sum_{s \in [\amax]} x_{w-1,s} \cdot s.
 $
 Fourth, we ensure that we always have enough resources:
 $
  \forall 2 \le w \le q: \alpha_w \ge \tilde{b}_w.
 $
 Next, we compute the endpoints~$d_w$ of each phase, assuming a schedule
 respecting the domination order.
 To this end, let $p^s_1$, $p^s_2$, $\dots$, $p^s_{\#_s}$ denote the processing times
 of jobs with resource requirement exactly~$s$ in non-increasing order.
 Further, let $\tau_s(y)$ denote the processing time spent to schedule the~$y$ longest jobs with resource requirement
 exactly~$s$, that is, we have $\tau_s(y)=\sum_{i=1}^{y}p^s_i$.
 Clearly, $\tau_s(x)$ is a concave function that can be precomputed for each $s \in [\amax]$.
 To compute the endpoints, we add:
 
 \begin{align} \label{const:ilp-convex}
  \forall w \in [q]: d_w=\sum_{s \in [\amax]} \tau_s(x^\Sigma_{w,s}).
 \end{align}
 
 Since we assume gapless schedules, we ensure that there is no gap:
 $
  \forall 1 \le w \le q-1: d_w \ge u_{w+1}-1.
 $
 This completes the construction of the mixed ILP using concave transformations.
 The number of integer variables used in the ILP is $2 q \cdot \amax$ (for
 $x_{w,s}^{(\Sigma)}$ variables) plus~$2q$ ($q$ for $\alpha_w$ and $d_w$
 variables, respectively).
 Moreover, the only concave transformations used in Constraint Set~\eqref{const:ilp-convex}
 are piecewise linear with only a polynomial
 number of pieces (in fact, the number of pieces is at most the number of jobs),
 as required to obtain fixed-parameter tractability of this extended class of
 ILPs~\citep[Theorem 2]{journal-BFNST20}.
\end{proof}

\section{A Glimpse on Multiple Resources}
So far we focused on scenarios with only one non-renewable resource.
In this section, we provide an outlook on scenarios with multiple resources
(still considering only one machine).
Naturally, all hardness results transfer.
For the tractability results, we identify several cases where tractability extends
in some form, while other cases become significantly harder.

Motivated by \Cref{thm:q-bmax-fpt}, we are interested in the computational complexity
of the \probMCSLong{} for cases where only~$\amax$ is small.
When $\nr=1$ and~$\amax=1$, then we have polynomial-time solvability via \Cref{thm:unit-P}.
The next theorem shows that this extends to the case of constant values of~$\nr$ and~$\amax$ if we assume unary encoding.
To obtain this results, we develop a dynamic-programming-based algorithm for
\noGapProblemOne{} and apply \Cref{lem:gapless}.
\begin{theorem}\label{thm-amax-XP}
	$1|\nr=\const,\unary|\cmax$ can be solved in~$O(q \cdot \amax^{O(1)} \cdot n^{O(\amax)} \cdot \log u_{\max})$ time.
\end{theorem}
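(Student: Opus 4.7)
The plan is to combine \Cref{lem:gapless} with a dynamic-programming algorithm for the non-idling variant $1, \NI|\nr=\const|-$. Applying \Cref{lem:gapless} reduces $1|\nr=\const|\cmax$ to this variant at a multiplicative cost of $O(\log u_{\max})$, so it suffices to design a DP running in $O(q \cdot \amax^{O(1)} \cdot n^{O(\amax)})$ time for the non-idling problem. The structural observation driving the DP is that jobs partition naturally into \emph{types} according to their resource requirement vectors $\vec{a}_j = (a_{1,j}, \dots, a_{r,j})$: two jobs share a type iff their vectors coincide. Since $r$ is constant and each coordinate lies in $\{0, 1, \dots, \amax\}$, there are at most $(\amax+1)^r = \amax^{O(1)}$ distinct types, and by Assumption~1 we may discard the all-zero type. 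Within any single type, the domination relation totally orders jobs by non-increasing processing time, so \Cref{lem:ordered} lets us assume without loss of generality that same-type jobs are scheduled in non-increasing order of processing time. Consequently, once we fix the number $n_t$ of scheduled type-$t$ jobs, we know exactly which ones these are: the $n_t$ longest-processing jobs of type $t$.

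Exploiting this, I introduce a DP with state $\vec{n} = (n_t)_{t \in T}$ where $T$ is the set of types and $0 \le n_t \le \#_t$ with $\#_t$ the total number of type-$t$ jobs. After sorting the processing times within each type as preprocessing, the cumulative processing time $P(\vec{n}) = \sum_t \sum_{i=1}^{n_t} p_i^t$ (with $p_i^t$ the $i$-th largest processing time within type $t$) and the cumulative consumption $C_i(\vec{n}) = \sum_t n_t \cdot a_{i,t}$ of each resource $i$ can be evaluated directly from $\vec{n}$. A state $\vec{n}$ is marked reachable iff some predecessor $\vec{n}^-$ obtained by decrementing a single coordinate $n_t$ by one is reachable and the feasibility condition holds: for the phase $w$ with $u_w \le P(\vec{n}^-) < u_{w+1}$ containing the start time of the newly added type-$t$ job, we have $C_i(\vec{n}) \le b_{i, w}$ for every resource $i$. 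The algorithm answers ``yes'' iff the terminal state $(\#_t)_{t \in T}$ is reachable, and a straightforward predecessor-pointer traceback recovers the schedule.

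The number of DP states is $\prod_t (\#_t + 1) \le (n+1)^{|T|} = n^{\amax^{O(1)}}$, which, absorbing the constant $r$ into the exponent's hidden constant, is written as $n^{O(\amax)}$. Each state has $|T| = \amax^{O(1)}$ incoming transitions, each costing $O(q + r)$ time to locate the supply phase (by binary search over $u_1, \ldots, u_q$) and check the $r$ resource constraints; together with the $O(\log u_{\max})$ overhead from \Cref{lem:gapless}, this yields the claimed bound. The principal delicacy I anticipate lies in justifying that the per-transition feasibility check suffices for global feasibility: because no new supply arrives strictly between two consecutive supply dates $u_w$ and $u_{w+1}$, every job whose start time falls in phase $w$ sees the same cumulative supply $b_{i,w}$, so verifying the constraint exactly at the moment a job is appended --- which is what the transition does --- automatically implies the constraint at every prefix of the final schedule. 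Combined with the implicit ordering enforced by \Cref{lem:ordered}, this turns the local transitions into a faithful encoding of global feasibility.
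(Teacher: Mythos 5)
Your proof is correct and follows the same overall strategy as the paper's: reduce to the non-idling problem via \Cref{lem:gapless}, group jobs into types by their resource-requirement vectors (at most $(\amax+1)^r$ nonzero ones), use \Cref{lem:ordered} to fix the within-type order as longest-first so that a count vector determines exactly which jobs have been scheduled, and run a dynamic program over these count vectors. The one genuine difference is the organization of the DP: the paper additionally indexes its table by the phase, writing $T[i,x_1,\dots,x_s]$, and its transition from phase $i-1$ to phase $i$ enumerates all sub-tuples $(y_1,\dots,y_s)\le(x_1,\dots,x_s)$, costing $(n+1)^{s}$ per entry (hence $(n+1)^{2s}$ overall) and requiring some care with the conditions forcing the jobs assigned to a phase to actually start in it. You instead drop the phase index, add one job per transition, and recover the relevant phase on the fly from the cumulative processing time $P(\vec{n}^-)$ of the predecessor state; since $P$ and the cumulative consumptions depend only on the state, feasibility of a continuation is order-independent and the recurrence is sound. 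Your variant is both easier to verify (the per-transition check is literally the feasibility condition at the new job's start time, and checking at job start times suffices because consumption only jumps there while supply is non-decreasing) and slightly faster, replacing the $(n+1)^{s}$ transition factor by $s\cdot(r+\log q)$. One cosmetic remark: like the paper, you compress the bound to $n^{O(\amax)}$, but with $(\amax+1)^r$ types the exponent is really $\amax^{O(1)}$ for constant $r\ge 2$; this imprecision is inherited from the theorem statement and the paper's own analysis, not introduced by your argument.
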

\begin{proof}
	We will describe a dynamic-programming procedure that computes whether there exists a gapless schedule (\Cref{lem:gapless}).
	Let~$r$ be the (constant) number of different resources.
	We distinguish jobs by their processing time as well as their resource requirements.
	To this end, we define the \emph{type} of a job~$J_j$ as a vector~$(a_{1,j},a_{2,j},\cdots,a_{r,j})^T$ containing all its resource requirements.
	Let~$\mathcal{T} = \{t_1,t_2,\ldots,t_{|\mathcal{T}|}\}$ be the set of all types such that for all~$t \in \mathcal{T}$ there is at least one job~$J_j$ of type~$t$.
	Let~$s := |\mathcal{T}|$ and note that~$s \leq (\amax+1)^r$.
	We first sort all jobs by their type using bucket sort in~$O(n)$ time and then sort each of the buckets with respect to the processing times in~$O(n \log n)$ time using merge sort.
	For the sake of simplicity, we will use~$P_t[k]$ to describe the set of the $k$~longest jobs of type~$t$.
	We next define the \emph{$\ell$\textsuperscript{th} phase} as the time interval from the start of any possible schedule up to~$u_{\ell+1}$.
	Next, we present a dynamic program $$T \colon [q] \times [n_{t_1}] \cup \{0\} \times \ldots \times [n_{t_s}] \cup \{0\} \rightarrow \{true,false\},$$ where~$n_{t_i}$ is the number of jobs of type~$t_i$.
	We want to store~$true$ in~$T[i,x_1,\ldots,x_s]$ if and only if it is possible to schedule at least the jobs in~$P_{t_k}[x_k]$ such that all of these jobs start within the~$i$\textsuperscript{th} phase and there is no gap.
	If~$T[q,n_{t_1},\ldots,n_{t_s}]$ is true, then this corresponds to a gapless schedule of all jobs and hence this is a solution.
	We will fill up the table~$T$ by increasing values of the first argument.
	That is, we will first compute all entries~$T[1,x_1,x_2,\ldots,x_s]$ for all possible combinations of values for~$x_i\in [n_{t_i}] \cup \{0\}$.
	For the first phase, observe that~$T[1,x_1,x_2,\ldots,x_s]$ is set to true if and only if the two following conditions are met.
	First, there are enough resources available at the start to schedule all jobs in all~$P_{t_i}[x_i]$.
	Second, the sum of all processing times of all ``selected jobs'' without the longest one end at least one time steps before~$u_2$ (such that the job with the longest processing time can then be started at last in time step~$u_2 -1$ which is the last time step in the first phase).
	For increasing values of the first argument, we do the following to compute~$T[i,x_1,x_2,\ldots x_s]$.
	We compute for all tuples of numbers~$(y_1,y_2,\ldots y_s)$ with~$y_k \leq x_k$ for all~$k \in [s]$ whether~$T[i-1,y_1,y_2,\ldots,y_s]$ is true, whether the corresponding schedule can be extended to a schedule for~$T[i,x_1,x_2,\ldots,x_{s}]$, and whether all selected jobs except for the longest one can be finished at least two time steps before~$u_{i+1}-1$.
	Since the first check and third check are very simple, we focus on the second check.
	To this end, it is actually enough to check whether~$\sum_{k\in [s]} \sum_{j \in P_{t_k}[y_k]} p_j \geq u_i-1$ since if this was not the case but there still was a gapless schedule, then we could add some other job to the~$(i-1)$\textsuperscript{st} phase and since we iterate over all possible combinations of values of~$y_i$, we would find this schedule in another iteration.
	
	It remains to analyze the running time of this algorithm.
	First, the number of table entries in~$T$ is upper-bounded by~$q \cdot \prod_{k\in [s]} (n_{t_k}+1) \leq (q \cdot (n+1)^s)$.
	For each table entry~$T[i,x_1,x_2,\ldots,x_s]$, there are at most~$\prod_{k\in [s]} (x_k+1) \leq  \prod_{k\in [s]} (n_{t_k}+1) \in O((n+1)^{s})$ possible tuples~$(y_1,y_2,\ldots,y_s)$ with~${y_k \leq x_k}$ for all~$k \in [s]$ and the three checks can be performed in~$O(s)$ time.
	Thus, the overall running time is~$O(q \cdot s \cdot (n+1)^{2s}) \subseteq O(q \cdot (\amax+1)^r \cdot (n+1)^{2r(\amax+1)})$ for computing a gapless schedule and by \Cref{lem:gapless} the time for solving~$1|nr=\const,\unary|\cmax$ is in~$O(q \cdot (\amax+1)^r \cdot (n+1)^{2r(\amax+1)} \cdot \log \umax)$.
\end{proof}
The question whether $1|\nr=\const,\unary|\cmax$ is in \FPT or \wonehard{} with respect to~$\amax$ remains open even for only a single resource.

We continue
with showing that already with two resources and
unit processing times of the jobs, the \probMCSLong{} becomes computationally
intractable, even when parameterized by the number of supply dates.
Note that \NP-hardness for $1|\nr=2,p_j=1|\cmax$ can also be transferred from
\citet[Theorem 4]{Grigoriev05} (the statement is for a different optimization
goal but the proof works).

\begin{proposition}
\label{prop:nr2qW1}
 $1|\nr=2,p_j=1|\cmax$ is \wone-hard when parameterized by the number of supply
 dates even if all numbers are encoded in unary.
\end{proposition}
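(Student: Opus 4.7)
The plan is to prove W[1]-hardness via a parameterized reduction from \multicoloredIndependentSet{} (which is W[1]-hard with respect to the size $k$ of the desired independent set; it is equivalent to Multicolored Clique via graph complementation). The constructed instance will use $q = \Theta(k^2)$ supply dates, establishing the parameterized reduction.

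Given $(G = (V, E), V_1, \ldots, V_k)$ with $|V_i| = N$, I would organize the target schedule into alternating ``restrictive'' phases that force specific choices and ``permissive'' phases that absorb the unselected jobs. For each color class $V_i$, I reserve a \emph{selection} phase of duration~$1$ (forcing exactly one vertex-job from $V_i$ to be scheduled) followed by a \emph{disposal} phase that absorbs the $N-1$ remaining vertex-jobs from $V_i$. For each pair $(i, i')$ of color classes, I reserve a \emph{verification} phase of duration~$1$ (requiring one edge-job whose endpoints correspond to the selections in $V_i$ and $V_{i'}$), plus a disposal phase for the leftover edge-jobs. This yields $q = 2k + 2\binom{k}{2} = \Theta(k^2)$ supply dates, which is the required bound.

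The two resources would play complementary roles. Resource~$1$ acts as a ``time budget'' marker: every real job consumes one unit, and supplies in restrictive phases are set so that only one job fits. Resource~$2$ encodes vertex indices: in the $V_i$-selection phase the supply of resource~$2$ is calibrated so that exactly one of the $N$ vertex-jobs of $V_i$ can be accommodated, and its consumption then records the chosen index $j_i$. In the verification phase for $(i, i')$, additional supplies of resource~$2$ are tuned so that a feasible continuation requires an edge-job whose resource-$2$ consumption is consistent with $j_i$ and $j_{i'}$, forcing its two endpoints to equal $v_{i, j_i}$ and $v_{i', j_{i'}}$. The forward direction translates a multicolored independent set directly into a schedule, while the backward direction reads off the selected vertex-jobs to recover the independent set. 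Gapless schedules may be assumed throughout via \Cref{lem:gapless}.

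The principal technical obstacle is maintaining consistency between vertex selections and edge verifications using only two cumulative tracks of polynomially bounded (unary) values. With binary encoding one could use exponential ``fingerprints'' to bundle all $k$ choices into a single cumulative quantity, but unary encoding rules this out. The construction must therefore dedicate a separate pair of phases to each of the $\binom{k}{2}$ pairs of color classes (justifying $q=\Theta(k^2)$) and balance supply amounts so that leftover resources from one restrictive phase cannot enable unintended jobs in a later one. The disposal phases in particular must be tuned to drain exactly the right leftover quantities of resource~$2$ so that the next restrictive phase starts in the intended cumulative state; designing this ``bookkeeping'' carefully is where the bulk of the technical work lies. Once the consistency argument is in place, the overall structure follows the same pattern as \Cref{thm:q-w1h}.
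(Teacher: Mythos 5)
Your proposal takes a genuinely different route from the paper's. The paper reduces directly from \unarybinpacking{} parameterized by the number~$k$ of bins, creating only $k$~supply dates, each supplying $(B,\, B(B-1))$, one unit-length object job $(1, s_i, B-s_i)$ per object, and $kB-n$ dummy jobs $(1,0,B)$. Since every job consumes exactly $B$ units summed over both resources, each block of $B$ unit-length jobs exhausts both supplies completely, and the cap of $B$ on resource~$1$ then forces the object jobs inside each block to fill a bin exactly. That reduction is short precisely because \unarybinpacking{} already packages \wonehardness{} together with unary encoding, so no selection/verification machinery is needed.

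Your sketch, by contrast, has a genuine gap at its core. The only information a cumulative resource can carry forward is a single number (its current leftover), and your architecture destroys or entangles that number before any verification phase can use it. Concretely: (i)~if the disposal phase for $V_i$ immediately follows its selection phase, as written, then after disposal all $N$ vertex-jobs of $V_i$ have been scheduled, so the cumulative consumption of resource~$2$ is a fixed constant independent of which vertex was selected --- the ``record'' of $j_i$ is erased before it is ever read; (ii)~if instead disposal is postponed, then at the verification phase for $(i,i')$ the leftover of resource~$2$ depends on all $k$ selections simultaneously, and isolating the dependence on $j_i$ and $j_{i'}$ with only polynomially bounded (unary) values requires Sidon-set-style numerical gadgets guaranteeing that sums of the form $f(j_i)+g(j_{i'})$ uniquely identify the pair; and (iii)~an edge-job merely being ``consistent with'' a single cumulative quantity does not force its two endpoints to equal $v_{i,j_i}$ and $v_{i',j_{i'}}$ without exactly such uniqueness-of-sums arguments. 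None of these mechanisms is specified, yet they constitute essentially the entire technical content of the reduction (they are what makes the \wonehardness{} proof of \unarybinpacking{} itself nontrivial). There is also a sign slip: reducing from \multicoloredIndependentSet{} requires verification via \emph{non-edge} jobs, or else one should reduce from \textsc{Multicolored Clique} and use edge jobs. As it stands the proposal defers its hardest steps, and at least one arrangement of its phases provably cannot work; either supply the full bookkeeping or switch to the paper's direct reduction from \unarybinpacking{}.
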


\begin{proof}
 Given an instance~$I$ of \textsc{Unary Bin Packing} with
 $k$~bins, each of size~$B$, and $n$~objects~$o_1,o_2,\ldots o_n$ of
 sizes~$s_1,s_2,\ldots s_n$ such that~$\sum_{i=1}^{n} s_i = k B$, we construct an
 instance~$I'$ of~$1|\nr=2,p_j=1|\cmax$ as follows.
 
 For
 each~$i \in [k]$, we add a supply~$q_i = (u_i, \tilde{b}_{1,i},
 \tilde{b}_{2,i}) := ((i-1)B, B, B(B-1))$; thus, we create $k$~supply dates. For
 each object~$o_i$ for~$i \in [n]$, we create an~\emph{object job}~$j_i =
 (p_i, a_{1,i}, a_{2,i}) := (1, s_i, B-s_i)$. Additionally, we create $kB -
 n$~\emph{dummy jobs}; each of them having processing time~$1$, no
 requirement of the resources of the first type, and requiring $B$~resources of
 the second type. We refer to the constructed instance as~$I'$.

 We show that~$I$ is a yes-instance if and only if there is a schedule for the
 constructed instance~$I'$ with makespan exactly~$kB$. Clearly, if $I$~is a
 yes-instance, then there is a $k$-partition of the objects to subsets~$S_1,
 S_2, \ldots, S_k$ such that for each~$i \in k$, $\sum_{o_j \in S_i} s_j = B$.
 For some set~$S_i$, we schedule all jobs in~$S_i$, one after
 another, starting at time~$(i-1)B$. Then, in the second step, we schedule the
 remaining dummy jobs such that we obtain a gapless
 schedule~$\sigma$. Naturally, since each~$S_i$ contains at most~$B$~objects,
 the object jobs are non-overlapping in~$\sigma$. Since in the second step we
 have exactly~$kB - n$~jobs available (recall that~$n$~is the number of object
 jobs) $\sigma$~is a gapless schedule with makespan exactly~$kB$. To check the
 feasibility of~$\sigma$, let us consider the first~$B$~time steps. Note that
 from time~$0$ to time~$|S_1|$, the machine processes all object jobs
 representing objects from~$S_1$; from time~$|S_1|+1$ to~$B$ it processes
 exactly $B-|S_1|$~dummy jobs. Thus, the number of used resources of type~$1$
 is~$\sum_{o_j \in S_i} s_j = B$, and the number of used resources of type~$2$
 is~$\sum_{o_j \in S_i} (B-s_j) + (B-|S_1|)B = |S_1|B - B + B^2 - |S_1|B =
 B(B-1)$. As a result, at time~$B-1$ there are no resources available, so we can
 apply the argument for the first~$B$~time steps to all following~$k-1$~periods
 of~$B$~time steps; we eventually obtain that $\sigma$~is a feasible schedule.
 
 \newcommand{\phaseJobs}{\ensuremath{J_{\textrm{p}}}}
 \newcommand{\objectPhaseJobs}{\ensuremath{J_{\textrm{o}}}}
 For the reverse direction, let~$\sigma$~be a schedule with makespan~$kB$ for
 instance~$I'$. We again consider the first~$B$~time steps. Let~\phaseJobs{} be the set of
 exactly~$B$~jobs processed in this time. Let~$A_1$ and~$A_2$ be the usage of
 the resource of, respectively, type~$1$ and~type~$2$ by the jobs
in~\phaseJobs{}. We denote by~$\objectPhaseJobs{} \subseteq \phaseJobs$ the
object jobs within~$\phaseJobs$. Then, $A_1 := \sum_{j \in
 \objectPhaseJobs} a_{1,j} + \sum_{j \in \phaseJobs \setminus \objectPhaseJobs}
 a_{1,j}$. In fact, since each dummy job has no requirement of the resources of
 the first type, we have~$A_1 = \sum_{j \in \objectPhaseJobs} a_{1,j}$.
 Moreover, there are only $B$~resources of type~$1$ available in the
 first~$B$~time steps, so it is clear that~$A_1 \leq B$.
 Using the fact that, for each job~$j_i$, it holds that~$a_{2,i} = B - a_{1,i}$,
 we obtain the following:
 \begin{align*}
 A_2 &:= \sum_{j \in \objectPhaseJobs} a_{2,j} + \sum_{j \in \phaseJobs
 \setminus \objectPhaseJobs} a_{2, j} \\&= |\objectPhaseJobs|B - A_1 + |\phaseJobs
 \setminus \objectPhaseJobs| B = B^2 - A_1.
 \end{align*}

 Using the above relation between~$A_1$ and~$A_2$, we show that~$A_1=B$. For the
 sake of contradiction, assume $A_1<B$. Immediately, we obtain that~$A_2 > B^2 -
 B = B(B - 1)$, which is impossible since we only have~$B(B - 1)$~resources of
 the second type in the first~$B$~time steps of schedule~$\sigma$. Thus,
 since~$A_1 = B$, we use exactly~$B$~resources of the first type and
 exactly~$(B-1)B$~resources of the second type in the first~$B$~time~steps
 of~$\sigma$. We can repeat the whole argument to all following~$k-1$~periods of
 $B$~time steps. Eventually, we obtain a solution to~$I$ by taking the objects
 corresponding to the object jobs scheduled in the subsequent periods of
 $B$-time steps.

 The reduction is clearly applicable in polynomial time and the number of supply
 dates is a function depending solely on the number of bins in the input instance
 of~\unarybinpacking{}.
\end{proof}

\Cref{prop:nr2qW1} limits the hope for obtaining positive results
for the general case with multiple resources.
Still, when adding the number of different resources to the combined 
parameter, we can extend
our fixed-parameter tractability result from \Cref{thm:q-bmax-fpt}.
Since we expect the number of different resources to be rather small in real-world applications,
we consider this result to be of practical interest.

\begin{proposition}\label{prop:q-bmax-fpt-const-resources}
 $1, \NI|\nr=r|\cmax$ is fixed-parameter tractable for the parameter~${q + \amax + r}$, where~$q$~is the number of supplies and~$\amax$~is the maximum
 resource requirement of a job.
\end{proposition}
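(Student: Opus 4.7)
The plan is to extend the mixed integer linear program from the proof of \Cref{thm:q-bmax-fpt} to handle multiple resources by replacing its scalar requirement index with a vector index over \emph{job types}. The key observation is that \Cref{lem:ordered} applies just as well with $r$ resources: two jobs with the same resource requirement vector are totally ordered by domination via their processing times, so there is always an optimal non-idling schedule that processes the longest jobs of each type first.

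For a job $J_j$ define its type $\vec{s}(j) := (a_{1,j}, \ldots, a_{r,j}) \in \{0, 1, \ldots, \amax\}^r$; the number of distinct types is at most $(\amax + 1)^r$, a function of $\amax + r$ alone. I would then introduce, for each phase $w \in [q]$ and each type $\vec{s}$, non-negative integer variables $x_{w,\vec{s}}$ counting type-$\vec{s}$ jobs started in phase $w$, together with prefix sums $x^{\Sigma}_{w,\vec{s}} = \sum_{w' \le w} x_{w',\vec{s}}$. For each phase $w$ and each resource $i \in [r]$, I would add $\alpha_{w,i}$ tracking the amount of resource $i$ available at the start of phase $w$, plus a variable $d_w$ for the endpoint of phase $w$. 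The total number of integer variables is $O(q(\amax+1)^r + qr + q)$, bounded by a function of $q + \amax + r$.

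The constraints mirror those in \Cref{thm:q-bmax-fpt} coordinate-wise: cumulative consistency of $x^{\Sigma}_{w,\vec{s}}$; completeness $\sum_{w} x_{w,\vec{s}} = \#_{\vec{s}}$, where $\#_{\vec{s}}$ counts input jobs of type $\vec{s}$; per-resource supply updates $\alpha_{w,i} = \alpha_{w-1,i} + \tilde{b}_{i,w} - \sum_{\vec{s}} x_{w-1,\vec{s}} \cdot s_i$; per-resource feasibility $\alpha_{w,i} \ge \tilde{b}_{i,w}$ for $w \ge 2$; and the non-idling constraint $d_w \ge u_{w+1} - 1$ for $w < q$. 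The endpoint is computed as $d_w = \sum_{\vec{s}} \tau_{\vec{s}}(x^{\Sigma}_{w,\vec{s}})$, where $\tau_{\vec{s}}(y)$ is the piecewise-linear concave function returning the sum of the $y$ largest processing times of type-$\vec{s}$ jobs; minimizing $d_q$ then yields the optimum makespan.

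The only conceptual point to verify is that type-based counts per phase form a complete characterization of non-idling schedules modulo the normalization of \Cref{lem:ordered}: across incomparable types no ordering is forced, and within a phase any permutation of the chosen jobs is feasible since resource availability is re-evaluated only at supply times. The main obstacle, already handled in \Cref{thm:q-bmax-fpt}, is fitting $d_w$ into the allowed concave transformations; here $\tau_{\vec{s}}$ has at most $n$ pieces per type, so Lenstra's theorem in its extension to polynomially piecewise-linear concave transformations~\citep[Theorem 2]{journal-BFNST20} applies directly, yielding fixed-parameter tractability for the combined parameter $q + \amax + r$.
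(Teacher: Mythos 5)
Your proposal is correct and follows essentially the same route as the paper: indexing the MILP variables of \Cref{thm:q-bmax-fpt} by resource-requirement vectors (at most $(\amax+1)^r$ types), tracking availability per resource and phase, and invoking \Cref{lem:ordered} within each type together with the concave piecewise-linear extension of Lenstra's theorem. In fact, you spell out the coordinate-wise constraints that the paper only claims ``translate in a straightforward way,'' so nothing is missing.
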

\begin{proof}
 The main observation needed to extend the ILP from \Cref{thm:q-bmax-fpt} is
 that, by \Cref{lem:ordered}, given two jobs with the same resource requirement,
 there is always a schedule that first schedules the longer (dominating) jobs.
 In essence, for each phase and each possible resource requirement, a solution
 is still fully described by the respective number of jobs with that requirement
 scheduled in the phase.
 
 For multiple resources, we describe the resource requirement of job~$j$ by a
 \emph{resource vector}~${\vec{s}=(a_{1,j},a_{2,j},\ldots,a_{r,j})}$.
 We use the following non-negative integer variables:
 \begin{enumerate}
  \item $x_{w,\vec{s}}$ denoting the number of jobs with resource vector~$\vec{s}$
        being started in phase~$w$,
  \item $x^{\Sigma}_{w,\vec{s}}$ denoting the number of jobs with resource vector~$\vec{s}$
        being started between phase~$1$ and~$w$,
  \item $\alpha_{y,w}$ denoting the number of resources of type~$y$ available in the beginning
        of phase~$w$,
  \item $d_w$ denoting the endpoint of phase~$w$, that is, the time when
        the job started latest in phase~$w$ ends.
 \end{enumerate}
 
	All constraints and proof arguments translate in a straightforward way from the proof of \cref{thm:q-bmax-fpt}.
\end{proof}

Next, by a reduction from \textsc{Independent Set} we show that the \probMCSLong{} is intractable for an unbounded number of resources
even when combining all considered parameters.

\begin{theorem}\label{thm-superhard}
 $1|\nr,p_j=1|\cmax$ is \np-hard and \wonehard{} parameterized by~$\umax$
 even if $\pmax=\amax=\bmax=1$ and~$q=2$.
\end{theorem}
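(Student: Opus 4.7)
The plan is to reduce from \textsc{Independent Set} parameterized by the solution size~$k$, which is well known to be \wonehard{}. Given an instance $(G,k)$ with $G=(V,E)$, I construct an instance of $1|\nr,p_j=1|\cmax$ with target makespan~$|V|$, setting $\umax:=k$ and $q:=2$. For each vertex $v\in V$ I introduce a unit-processing-time \emph{vertex job}~$J_v$, and for each edge $e=\{u,v\}\in E$ I introduce a distinct \emph{edge resource}~$r_e$ that is supplied in one unit at time~$0$ and one unit at time~$\umax$, and is required by each of $J_u$ and $J_v$ in exactly one unit. The construction immediately yields $\pmax=\amax=\bmax=1$, $q=2$, and $\umax=k$.

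For correctness, observe that any feasible schedule has makespan at least~$|V|$, so meeting the target is equivalent to finding a gapless schedule. In a gapless schedule, the first $k$ time slots (all occurring strictly before the second supply at time~$\umax$) are filled by exactly $k$ vertex jobs, say $\{J_v : v\in S\}$. For each edge $e=\{u,v\}$, only a single unit of~$r_e$ is available at any time $t<k$, so at most one of $J_u,J_v$ may start in~$[0,k)$; hence $S$ is an independent set of size~$k$ in~$G$. Conversely, given an independent set~$S$ of size~$k$, I schedule the jobs $\{J_v : v\in S\}$ in the first $k$ slots (in any order) and the remaining $|V|-k$ vertex jobs afterwards. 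Feasibility holds edge-by-edge: for each edge $e=\{u,v\}$ at most one endpoint lies in~$S$, so at most one unit of~$r_e$ is consumed before time~$\umax$, while in the second phase the total supply~$2$ of~$r_e$ covers its (at most two) remaining requesters.

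The main subtlety I foresee is compatibility with Assumption~\ref{asp:1}, specifically the requirement that every job has a nonzero resource demand; jobs corresponding to isolated vertices of~$G$ would violate this. I plan to sidestep the issue by preprocessing the graph: because every isolated vertex of~$G$ belongs to some maximum independent set, any such vertex can be removed and $k$ decreased by one before the reduction is applied, leaving the new parameter value~$\umax$ bounded by~$k$. This transfers \wonehardness{} from \textsc{Independent Set} (parameter~$k$) to $1|\nr,p_j=1|\cmax$ (parameter~$\umax$); and since the entire construction runs in polynomial time, \np-hardness carries over from the \np-hardness of \textsc{Independent Set}.
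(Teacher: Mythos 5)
Your reduction is exactly the one in the paper: edge resources supplied in one unit at time~$0$ and one unit at time~$k=\umax$, unit-length vertex jobs requiring their incident edge resources, and the equivalence between a gapless (makespan-$|V|$) schedule and an independent set of size~$k$, argued in both directions just as the paper does. The proposal is correct; the only addition is your (sensible, but minor) preprocessing of isolated vertices to respect Assumption~\ref{asp:1}, which the paper does not discuss.
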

\begin{proof}
 We provide a parameterized reduction from the NP-hard \textsc{Independent Set}
 problem which, given an undirected graph~$G$ and a positive integer~$k$, asks
 whether there is an \emph{independent set} of size~$k$, that is, a set of $k$~vertices
 in~$G$ which are pairwise non-adjacent.
 \textsc{Independent Set} is \wonehard{} for the size~$k$ of the independent 
	set~\cite{DF13}.
 
 Given an \textsc{Independent Set} instance~$(G,k)$ we create an instance of $1|\nr,p_j=1|\cmax$
 as follows. Let $V(G)=\{v_1,\ldots,v_n\}$ and $E(G)=\{e_1,\ldots,e_m\}$.
 For each edge in~$G$, we create one resource, that is, $\nr=m$.
 At time~$u_1=0$, we provide one initial unit of every resource.
 At time~$u_2=k$, we provide another unit of every resource.
 For each vertex $v_j \in V(G)$, there is one job~$J_j$ of length one
 with resource requirement being consistent with the incident edges,
 that is, $a_{i,j}=1$ if $v_j \in e_i$ and $a_{i,j}=0$ if $v_j \notin e_i$.
 
 We claim that there is a schedule with makespan~$\cmax=n$ if and only if there is
 an independent set of size~$k$.
 
 For the ``if'' direction, let $V'=\{v_{\ell_1},\ldots,v_{\ell_k}\}$~be an
 independent set in~$G$.
 Observe that every schedule that schedules the jobs~$J_{\ell_1}$, $\ldots$, $J_{\ell_k}$
 (in any order) in the first phase and then all other jobs (in any order)
 in the second phase is feasible and has makespan~$\cmax=n$.
 Feasibility comes from the fact that we have an independent set so that no two
 jobs scheduled in the first phase require the same resource.
 (After the second supply, there are enough resources to schedule all jobs.)
 The makespan can be verified by observing that we have unit processing time
 and that the machine does not idle.
 
 For the ``only if'' direction, assume that there is a feasible schedule
 with makespan~$\cmax=n$ and let $\mathcal{J}'=\{J_{j_1},\ldots,J_{j_k}\}$ be the
 jobs which are scheduled at time points~$0$ to~$k-1$.
 (Note that~$\mathcal{J}'$ is well-defined since each job has length one,
  there are $n$~jobs in total, and the makespan is~$n$.)
 We claim that the set~$V'=\{v_{j_1},\ldots,v_{j_k}\}$ is an independent set.
 Assume towards a contradiction that two vertices are adjacent.
 Then, both jobs are scheduled in the first phase and require one unit
 of the the same edge resource; a contradiction.
\end{proof}

Finally, to complement \cref{thm-superhard}, we show that~$1|\nr|\cmax$ parameterized by~$\umax$ is in XP.
Note that for this algorithm we do not need to assume unit processing times.

\begin{proposition}\label{prop-umax-XP}
	$1|\nr|\cmax$ can be solved in~$O(n^{\umax+1} \cdot \log \umax)$ time.
\end{proposition}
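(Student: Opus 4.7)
The plan is to combine \Cref{lem:gapless} with a brute-force enumeration of the initial portion of any feasible schedule.

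First, I would invoke \Cref{lem:gapless} to reduce to the non-idling feasibility problem $1,\NI|\nr|-$, paying an $O(\log\umax)$ factor for the binary search over the initial idle time $g\in\{0,1,\ldots,\umax\}$. After shifting all supply times by $-g$ (and merging any non-positive times into a single supply at time~$0$), the new last supply time is still at most $\umax$. It therefore suffices to decide feasibility of $1,\NI|\nr|-$ in $O(n^{\umax+1})$ time.

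Second, I would exploit the following observation: in any gapless schedule starting at time~$0$, the $k$-th scheduled job starts at time $\sum_{i<k}p_{j_i}\ge k-1$ since every $p_j$ is a positive integer. Hence at most $\umax$ jobs start strictly before time $\umax$. The algorithm enumerates every ordered sequence of pairwise distinct jobs of length at most $\umax$ as a candidate prefix of the schedule, checking incrementally (while extending the sequence) whether each resource requirement is met at the corresponding start time. The number of candidate sequences is bounded by
\[
 \sum_{k=0}^{\umax} n(n-1)\cdots(n-k+1) \;\le\; \sum_{k=0}^{\umax} n^k \;=\; O(n^{\umax+1}).
\]

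Third, the core of the argument is the suffix claim: whenever an enumerated prefix $S$ is resource-feasible, appending the remaining jobs $\J\setminus S$ in an arbitrary order yields a feasible non-idling schedule. Since no supply arrives after $\umax$, the quantity of resource $i$ still available right after the prefix ends equals $\sum_\ell \tilde{b}_{i,\ell}-\sum_{J_j\in S}a_{i,j}$, which by \Cref{asp:1}(1) (applied per resource type) is at least $\sum_{J_j\in\J\setminus S}a_{i,j}$. Hence every ordering of the remaining jobs preserves a non-negative balance of every resource at each start time. Conversely, any feasible gapless schedule induces an enumerated sequence (its jobs that start before $\umax$, taken in processing order), so the enumeration exhausts all feasible schedules up to the irrelevant ordering of the tail.

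The main (minor) obstacle is verifying this suffix claim in the multi-resource setting, which relies on \Cref{asp:1}(1) holding per resource type (as is implicit in the multi-resource variant, since otherwise no feasible schedule can exist). Once the claim is in place, combining the enumeration with the binary search from \Cref{lem:gapless} yields the stated running time $O(n^{\umax+1}\cdot\log\umax)$.
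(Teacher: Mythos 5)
Your proposal is correct and follows essentially the same route as the paper: reduce to the gapless case via \Cref{lem:gapless} (paying the $\log\umax$ factor), brute-force the at most $\umax$ jobs that can start before time~$\umax$, and argue that the remaining jobs can be appended in arbitrary order since no further supplies arrive. Your explicit justification of the suffix claim via \Cref{asp:1}(1) is a welcome bit of extra detail that the paper leaves implicit; just make sure the arbitrary-tail argument is only invoked for prefixes that actually extend to time~$\umax$ (or contain all jobs), which is exactly the case for the prefix induced by any feasible gapless schedule.
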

\begin{proof}
	We solve~$1|\nr|\cmax$ by basically brute-forcing all schedules up to time~$\umax$.
	By \Cref{lem:gapless}, we may assume that we are looking for a gapless schedule.
	The algorithm now iteratively guesses the next job in the schedule (starting with a first job at time~$0$).
	Since we assume that processing any job requires at least one time unit, the algorithm guesses at most~$\umax$ jobs until time~$\umax$.
	Afterwards, we can schedule the jobs in any order as no new resources become available afterwards.
	Notice that guessing up to~$\umax$ jobs take~$O(n^\umax)$ time, verifying whether a schedule is feasible takes~$O(n)$ time, and \Cref{lem:gapless} adds an additional factor of~$\log \umax$ for assuming a gapless schedule.
	This results in an overall running time of~$O(n^{\umax+1} \cdot \log \umax)$.
\end{proof}

\section{Conclusion}
We provided a seemingly first thorough multivariate complexity analysis 
of the \probMCSLong{} on a single machine.
Our main concern was the
case of one resource type ($\nr=1$).
\Cref{tab:results} surveys our results.

Specific questions refer to the parameterized complexity with 
respect to the single parameters $\amax$ and~$\pmax$, their combination,
and the closely related parameter number of job types.
Notably, this might be challenging to answer 
because these questions are closely 
related to long-standing open questions for \textsc{Bin Packing}
and $P||\cmax$ \cite{KK18,KKM20,MB18}.
Indeed, parameter combinations may be unavoidable in order to identify practically
relevant tractable cases.
For example, it is not hard to derive from our statements (particularly \Cref{asp:1} and \cref{lem:gapless})
fixed-parameter tractability for $\bmax+q$ while for the single parameters~$\bmax$ and~$q$ it is both times computationally hard.

Another challenge is to study the case of multiple machines, which is
obviously computationally at least as hard as the case of a single machine but possibly very relevant in practice.
It seems, however, far from obvious to generalize our algorithms to the multiple-machines case.

We have also seen that cases where the jobs
can be ordered with respect to the domination ordering (\Cref{def:domination})
are polynomial-time solvable. It may be promising to consider structural
parameters measuring the distance from this tractable special case 
in the spirit of 
distance from triviality parameterization~\cite{GHN04,Nie06}.

Our results for multiple resources certainly mean only first steps.
They clearly invite 
to further investigations, particularly concerning a multivariate 
complexity analysis.

\section*{Acknowledgments}
The main work was done while Robert Bredereck was with TU~Berlin.
Andrzej Kaczmarczyk was supported by the DFG project AFFA (BR~5207/1 and NI~369/15).
Péter Györgyi was supported by the National Research, Development and Innovation
Office – NKFIH (ED\_18-2-2018-0006), and by the J\'{a}nos Bolyai Research Scholarship of the Hungarian Academy of Sciences.
The project started while Robert Bredereck, Péter Györgyi, and Rolf Niedermeier were attending the Lorentz center workshop
``Scheduling Meets Fixed-Parameter Tractability'', February 4--8, 2019, Leiden,
the Netherlands, organized by Nicole Megow, Matthias Mnich, and Gerhard~J.\ Woeginger. Without this meeting, this work would not have been started.

\bibliographystyle{plainnat}
\bibliography{mybibfile}

\end{document}